\numberwithin{equation}{section}
\newtheorem{thm}{Theorem}[section]
\newtheorem{lemma}[thm]{Lemma}
\newtheorem{prop}[thm]{Proposition}
\newtheorem{cor}[thm]{Corollary}
\theoremstyle{definition}
\newcommand{\defeq}{\stackrel{\rm{def}}{=}}
\renewcommand{\Re}{\operatorname{\rm Re}\nolimits}
\renewcommand{\Im}{\operatorname{\rm Im}\nolimits}
\def \rank {\operatorname{rank}}
\def \comp {\operatorname{comp}}
\def \supp {\operatorname{supp}}
\def \supp {\operatorname{supp}}
\def \mco {{\mathcal O}}
\def \mca {{\mathcal A}}
\def \vol {{\rm vol}}
\def \restrict {\upharpoonright}
\def \mch {{\mathcal H}}
\def \Real {{\mathbb R}}
\def \Sphere {\mathbb{S}}
\def \Complex {\mathbb{C}}
\def \Natural {{\mathbb N}}
\def \Sphere {{\mathbb S}}
\def \mcr{{\mathcal R}}
\def \U+ {U_+}
\def \Integers {{\mathbb Z}}
\title [Lower bounds for resonance counting functions]
{Lower bounds for resonance counting functions for 
obstacle scattering in even dimensions
}
   \author { T.J. Christiansen}
\keywords{resonance, scattering theory}
\address{Department of Mathematics,
University of Missouri,
Columbia, Missouri 65211, USA} 
\email{christiansent@missouri.edu}
\begin{document}

\begin{abstract}  In even dimensional Euclidean scattering, the 
resonances lie on  the logarithmic cover of the complex plane.
This paper studies resonances for obstacle scattering in $\Real^d$
with Dirchlet
or admissable Robin boundary conditions, when $d$ is even.
Set $n_m(r)$ to be the
number of resonances with norm at most $r$ and argument between $m\pi$ and
$(m+1)\pi$.  Then  $\lim\sup _{r\rightarrow \infty}\frac{\log n_m(r)}{\log r}=d$
if $m\in \Integers \setminus \{ 0\}$.
\end{abstract}

\maketitle

\section{Introduction}
This paper studies resonances for scattering by obstacles in  even
dimensions. In this setting the 
resonances lie on the logarithmic cover of the complex plane.
 The main result is that for $m\in \Integers \setminus \{0\}$
the  counting function for the number of 
resonances with norm at most $r$ and argument between 
$m\pi$ and $(m+1)\pi$ has maximal order of growth.  To the
best of our knowledge, the only specific obstacles for which this has been 
known before are balls.  

Let $\mco\subset \Real^d$ be a bounded open set with smooth boundary $\partial \mco$,
and suppose $\Real^d \setminus \mco$ is connected.  When $\mco$ satisfies these conditions
we shall call it an {\em obstacle}.  Consider $-\Delta$ on $\Real ^d \setminus {\overline{\mco}}$,
where $\Delta \leq 0$ is the usual Euclidean Laplacian.  We impose either Dirichlet 
($u \restrict_{\partial (\Real^d \setminus \overline{\mco})}= 0$) or
Robin-type boundary conditions:
$$h(x) u (x)+\frac{\partial}{\partial n}u(x) =0 \; \text{on}\;\partial (\Real^d \setminus 
\overline{\mco})$$
where $n$ is the outward pointing unit normal to $\partial(\Real^d \setminus \overline{\mco})$
and $h\in C^1(\partial \mco)$ satisfies $h\geq 0$ in order to be admissable.
We note that by choosing $h\equiv 0$ we obtain the Neumann boundary condition.
We shall denote the corresponding operator 
(satisfying either the Dirichlet or admissable Robin-type boundary
condition) by $P$.
We choose the upper half-plane to be the physical half plane, so that 
$R(\lambda)=(P-\lambda^2)^{-1}$
is bounded on $L^2(\Real^d \setminus \overline{\mco})$ 
for $\Im \lambda >0$.  It is well known that for any 
$\chi \in L^{\infty}_{\comp}(\Real^d \setminus \overline{\mco})$, 
$\chi R(\lambda)\chi$ has a meromorphic continuation to $\Complex$
if $d$ is odd and to $\Lambda$, the logarithmic cover of $\Complex\setminus \{ 0\}$, if 
$d$ is even (e.g. \cite{sj-zw}).
  If $\chi=1$ in a neighborhood of $\overline{\mco}$ 
the location of the poles of $\chi R(\lambda)\chi$ 
is independent of the choice of such $\chi$.
The poles of this meromorphic continuation are called {\em resonances}.

We can describe a point $\lambda \in \Lambda$ by its modulus $|\lambda|$ and its argument $\arg \lambda$.  On
$\Lambda$ we do not identify points whose arguments differ by an integral multiple of $2\pi$.
We define, for $m\in \Integers$, 
$$\Lambda_m =\{ \lambda \in \Lambda: \; m\pi <\arg \lambda <(m+1)\pi\}$$
and call this the $m$th sheet of $\Lambda$.
We note that our choice of the physical half plane means that it is identified with
$\Lambda_0$ in the case of even $d$.

For $m\in \Integers$ and even $d$ we define the $m$th resonance counting function:
$$n_m(r)= \# \{ \text{ $\lambda_j$: $\lambda_j$ is a pole of 
$R(\lambda)$ with $m\pi <\arg \lambda_j <(m+1)\pi$, 
and $|\lambda_j|<r$}\}.$$
Here and everywhere we count with multiplicity, see (\ref{eq:Rmult}).

Then our main result is
\begin{thm}\label{thm:main}
Let $d$ be even, $\mco \subset \Real^d$ be
an obstacle as defined above, and suppose $\mco \not = \emptyset.$  Consider $-\Delta$
on $\Real^d \setminus \overline{\mco}$ with any of Dirichlet, Neumann, or admissable Robin-type 
boundary conditions. 
Then, with $n_m(r)$ the resonance-counting function for the $m$th sheet as defined above,
$$\lim \sup _{r\rightarrow \infty} \frac{ \log n_m(r)}{\log r} = d$$
for any $m\in \Integers \setminus \{0\}$.
\end{thm}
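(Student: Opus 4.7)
The upper bound $\limsup \log n_m(r)/\log r \le d$ follows from the known polynomial bound $n_m(r) = O(r^d)$ for even-dimensional obstacle scattering (after Vodev and Zworski), so only the lower bound requires argument.

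My plan for the lower bound is to pass through a scattering-type determinant. Using the meromorphic continuation of the resolvent to $\Lambda$, one introduces a determinant $s(\lambda)$ (defined via a Fredholm determinant of an operator built from $R_0$ and boundary layer potentials on $\partial\mco$) whose divisor on $\Lambda$ coincides, with multiplicities, with the set of resonances. The behavior of $s$ on sheet $m$ versus sheet $0$ is controlled by a monodromy identity of the schematic form
$$ s(e^{im\pi}\lambda) = s(\lambda) \cdot \mathcal{G}_m(\lambda), $$
where $\mathcal{G}_m$ arises from the difference $R_0(e^{im\pi}\lambda) - R_0(\lambda)$ which, in even dimensions, is nonzero and smoothing. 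Standard trace-class estimates on boundary layer potentials show that $\mathcal{G}_m$ is entire of order at most $d$. The key step is to show $\mathcal{G}_m$ has order \emph{exactly} $d$: here one invokes the short-time asymptotic expansion of the relative heat trace,
$$ \tr(e^{-tP} - e^{-tP_0}) \sim \sum_{k\geq 0} a_k t^{(k-d)/2}, \quad t \to 0^+, $$
with $a_0$ a nonzero multiple of $\vol(\mco)$, and uses a Mellin transform argument to translate this into a growth estimate $\log|\mathcal{G}_m(\lambda)| \geq C|\lambda|^{d-\epsilon}$ along some ray in $\mathbb{C}$ for every $\epsilon>0$. Jensen's formula applied to $\mathcal{G}_m$, together with the physical-sheet upper bound $n_0(r) = O(r^d)$, then forces $n_m(r) \geq r^{d-\epsilon}$ along a subsequence, and the theorem follows.

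The main obstacle is the last step: transferring the heat-trace lower bound into a lower bound on $\log|\mathcal{G}_m|$ in a direction that Jensen's formula can exploit. One must rule out cancellations in $\mathcal{G}_m$ that could mask the leading heat coefficient, typically via a Phragm\'en--Lindel\"of argument combined with the sharp expansion of the scattering kernel at infinity. This is also where the even-dimensional hypothesis is essential: the logarithmic branch cuts of the free resolvent are precisely what produce a nontrivial monodromy factor $\mathcal{G}_m$ of maximal growth, whereas in odd dimensions the resolvent is already entire and no such factor appears.
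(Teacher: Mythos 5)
Your proposal has the right overall shape --- reduce to a scalar determinant whose zeros on the physical sheet count the resonances on sheet $m$, estimate its growth, and turn that into a zero-counting lower bound --- and it correctly identifies the even-dimensional monodromy as the source of the phenomenon. This is indeed what the paper does, via Proposition~\ref{p:chi-hi4} and the function $f_m(\lambda)=\det(mS(\lambda)-(m-1)I)$, which plays the role of your $\mathcal{G}_m$. However, two of your key steps have genuine gaps.

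First, the heat-trace/Mellin route to $\log|\mathcal{G}_m(\lambda)|\ge C|\lambda|^{d-\epsilon}$ on a ray is not supported and is almost certainly false as stated. The short-time heat coefficient $a_0\propto\vol(\mco)$ controls the Weyl asymptotics of the scattering \emph{phase} on the positive real axis (this is exactly (\ref{eq:weyl}) in the paper), but on the real axis $S(\lambda)$ is unitary and $|\det S(\lambda)|=1$, so the \emph{modulus} of the scattering determinant carries no growth there to be Mellin-transferred into the complex plane. A Phragm\'en--Lindel\"of argument cannot manufacture a lower bound for $|\mathcal{G}_m|$ on a complex ray from real-axis data of modulus one. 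The paper instead obtains the crucial lower bound $\log|f_m(i\sigma)|\gtrsim\sigma^d$ by an entirely different mechanism: on the imaginary axis the kernel $k(i\sigma)$ is self-adjoint, the Lax--Phillips/Beale monotonicity results (Theorem~\ref{thm:bealemono}, Proposition~\ref{p:l-pmono}) compare its eigenvalues with those for a ball contained in $\mco$, and explicit uniform Bessel asymptotics for the ball (Lemma~\ref{l:spherecalc}) produce the $e^{c\sigma^d}$ growth. No heat trace enters here. The Weyl/heat-trace input is used for the \emph{opposite} purpose: together with the Eckmann--Pillet/Lechleiter--Peters inside--outside duality it yields the upper bound $\log|f_m(\lambda)|=O(|\lambda|^{d-1})$ on the real boundary rays (Proposition~\ref{p:summod2pi} and the following proposition), which is the other half of the dichotomy.

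Second, ``Jensen's formula plus $n_0(r)=O(r^d)$'' is too coarse to close the argument. Jensen on a disk averages $\log|\mathcal{G}_m|$ over a full circle, and a zero-free function like $\exp(cz^d)$ has circular average zero despite enormous growth on some rays, so Jensen by itself does not convert a ray lower bound into a zero count. What the paper actually uses is a half-plane factorization theorem of Govorov (Theorem~\ref{thm:govorov1}), the analog of Hadamard factorization adapted to a half-plane, plus the extracted consequence Proposition~\ref{p:upperbd}: if the zero-counting function of an order-$\le d$ function on $U_+$ were $O(r^{\rho'})$ with $\rho'<d$ and the boundary values grow no faster than $e^{O(t^{\rho'})}$, then $\log|f(i\sigma)|\le C\sigma^{\max(\rho'+\epsilon,\,d-1)}$, which requires the delicate bookkeeping of the real coefficients $a_j$ and the parity of $q=[\rho]$. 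This is the step you would need to supply; it is not a direct corollary of Jensen. In short: your scaffolding is broadly compatible with the paper's, but both the source of the lower growth bound (monotonicity plus Bessel, not heat trace) and the complex-analytic engine (Govorov's half-plane factorization, not Jensen) need to be replaced before the argument holds.
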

The quantity $\lim \sup_{r\rightarrow \infty}\frac{\log n_m(r)}{\log r}$ is
called the order of $n_m$.
By results of Vodev \cite{vodeveven, vodev2}, $d$ is the maximum possible
value of 
the order of $n_m$.  A result of Sj\"ostrand and Zworski 
\cite{sj-zwconvex}
for asymptotics of resonances in certain regions of $\Lambda_{\pm 1}$ for
the Dirichlet problem for convex obstacles has as a corollary that 
the order of $n_{\pm 1}(r)$ is at least $d-1$ for such cases.  

In general  lower bounds on resonance counting functions have proved
elusive.
We may contrast Theorem \ref{thm:main} with what is known for obstacle 
scattering in odd dimensions $d$, $d\geq 3$.  In odd dimensions the 
resonances lie on $\Complex$ as the double cover of $\Complex$.  The 
analogous counting function $n(r)$
is for the number of resonances with norm at most 
$r$, and there is a an upper bound of the form $Cr^d$ for sufficiently large 
$r$, with constant $C$ depending on the obstacle \cite{melroseobstacle}.  
In {\em odd} dimensions as far as we know the only specific obstacles for 
which it is known that $\lim \sup_{r\rightarrow \infty}(\log n(r)/\log r) =d$
are balls, for which stronger results are known \cite{stefanov, zwradpot}.  
However, from 
 \cite{bbmeromorphic} it is known that this limit must be $d$ for 
many star-shaped obstacles.

This paper also contains, for completeness,
 some basic results on the behavior of the 
scattering matrix $S(\lambda)$ at $0$ in even dimensions.  In particular, for 
a large class of operators (``black box'' compactly 
supported perturbations of the Laplacian) in {\em even} dimension $d$
 it is shown that $\lim_{\lambda \downarrow 0}
S(\lambda)=I$.  See Section \ref{s:smat0} for results, references, and 
further remarks.

Let $P$ be the Laplacian in the exterior of an obstacle with Dirichlet
or admissable Robin boundary conditions.
Let $S(\lambda)$ denote the scattering matrix associated with the 
operator $P$; its definition
is recalled in Section \ref{s:lowerbounds}. 
In both even and odd dimensions, when $\arg \lambda = \pi/2$
it is easier
to say something about the eigenvalues of $S(\lambda)-I$ than
it is for an arbitrary value of $\lambda $ with $0<\arg \lambda <\pi$.  
Lax and Philips \cite{la-ph},
Beale \cite{beale}, and Vasy \cite{vasy} used this for {\em odd} $d\geq 3$
 to obtain lower bounds of the type $c_0r^{d-1}$ for 
the number of pure imaginary resonances with norm 
at most $r$  for obstacle scattering
(\cite{beale,la-ph}) 
and for scattering by fixed-sign potentials (\cite{la-ph,vasy}).  
The situation is quite different
for {\em even} dimensions $d$.  
For 
obstacle scattering of the type we consider here or for scattering by fixed-sign potentials for any $m\in \Integers$
there are at most finitely many resonances with argument $\pi m + \pi/2$ 
 \cite{beale,ch-hi4}.

Here we make use of the behavior of $S(e^{i\pi /2}\sigma)-I$, $\sigma>0$, to prove our theorem.


We now outline the proof of our main theorem, which requires introducing some
notation and results from other works.
  With $R(\lambda)$ denoting the meromorphic
continuation of the resolvent $(P-\lambda^2)^{-1}$, 
for $\lambda_0\in \Lambda$, the
multiplicity of a pole of $R$ at $\lambda_0$ is defined to be
\begin{equation}\label{eq:Rmult}
\mu_R(\lambda_0)\defeq \rank \int_{\gamma_{\lambda_0}} R(\lambda) d\lambda
\end{equation}
where $\gamma_{\lambda_0}$ is a small posivitely oriented curve enclosing 
$\lambda_0$ and no poles of the resolvent, except, possibly, at $\lambda_0$.
For a scalar meromorphic function $f$ defined on $\Lambda$, $\lambda_0\in 
\Lambda$, we define
$m_{\text{sc}}(\lambda_0)=k\in \Integers$ if and only if $f(\lambda)(\lambda-
\lambda_0)^{-k}$ is bounded in a sufficiently small neighborhood of $\lambda_0$
and $\lim_{\lambda \rightarrow \lambda_0 } \left( f(\lambda) (\lambda -\lambda_0)^{-k} \right) \not = 0$.  Thus $m_{\text{sc}}$ is positive at zeros and 
negative at poles, and $m_{\text{sc}}(f,\lambda_0)=0$ if $\lambda_0$ is
neither a zero nor a pole of $f$.

\begin{prop}\label{p:chi-hi4} \cite[Corollary 4.9]{ch-hi4}
For $P$ as above, $m\in \Natural$, and 
$\lambda_1\in \Lambda$, 
$$\mu_R(\lambda_1 e^{im\pi})-\mu_R(\lambda_1) = 
m_{\text{sc}}(\det(mS(\lambda)-(m-1)I),\lambda_1).$$
\end{prop}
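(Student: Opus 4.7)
The plan is to reduce to the case $m=1$ and then build up general $m$ by iterating a M\"obius-type monodromy relation for $S$ on $\Lambda$.

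First I would establish the base case
\[
\mu_R(\lambda_1 e^{i\pi})-\mu_R(\lambda_1)=m_{\text{sc}}(\det S(\lambda),\lambda_1).
\]
This comes from a jump formula for the resolvent across the branch cut at $0$, schematically of the form
\[
R(e^{i\pi}\lambda)-R(\lambda)= c_d\, E(\bar\lambda)^\ast (I-S(\lambda))E(\lambda),
\]
where $E(\lambda)$ is the Eisenstein/Poisson operator sending data at infinity to outgoing solutions, combined with the Gohberg--Sigal logarithmic residue theorem applied to the operator pencil $S(\lambda)$. Since the exterior problem has no physical-sheet bound states, the jump in pole multiplicity as one crosses from $\Lambda_0$ to $\Lambda_1$ is exactly the order of $\det S(\lambda)$ at $\lambda_1$.

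Next, I invoke the M\"obius monodromy of the scattering matrix in even dimensions,
\[
S(e^{i\pi}\lambda) = 2I - S(\lambda)^{-1}.
\]
A short induction on $m$ produces the closed form
\[
S(e^{im\pi}\lambda) = \bigl((m+1)S(\lambda)-mI\bigr)\bigl(mS(\lambda)-(m-1)I\bigr)^{-1},
\]
whence
\[
\det S(e^{im\pi}\lambda) = \frac{\det((m+1)S(\lambda)-mI)}{\det(mS(\lambda)-(m-1)I)}.
\]

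Finally I telescope. Applying the base case with $\lambda_1$ replaced by $\lambda_1 e^{ik\pi}$ for $k=0,\dots,m-1$, summing, and using the change-of-variables identity $m_{\text{sc}}(\det S(\lambda),\lambda_1 e^{ik\pi})=m_{\text{sc}}(\det S(\lambda e^{ik\pi}),\lambda_1)$,
\[
\mu_R(\lambda_1 e^{im\pi})-\mu_R(\lambda_1)
= m_{\text{sc}}\!\Bigl(\prod_{k=0}^{m-1}\det S(\lambda e^{ik\pi}),\lambda_1\Bigr)
= m_{\text{sc}}\bigl(\det(mS(\lambda)-(m-1)I),\lambda_1\bigr),
\]
the last equality holding because the factors telescope as
\[
\prod_{k=0}^{m-1}\frac{\det((k+1)S-kI)}{\det(kS-(k-1)I)}=\frac{\det(mS-(m-1)I)}{\det I}.
\]

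The hardest step will be establishing the M\"obius monodromy $S(e^{i\pi}\lambda)=2I-S(\lambda)^{-1}$ itself: the even-dimensional free resolvent has a logarithmic branch at $0$, and the corresponding jump of $R$ across the cut involves a rank-structured correction whose precise dependence on $S(\lambda)$ and on the conjugate kernel $E(\bar\lambda)^\ast$ must be tracked with some care. Once the monodromy is in hand, the remainder is essentially algebra.
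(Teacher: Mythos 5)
This proposition is \emph{cited}, not proved, in the paper: it appears as a quoted result from \cite[Corollary 4.9]{ch-hi4}, so there is no in-text proof to compare your proposal against. With that caveat, your reduction scheme (establish $m=1$ sheet-by-sheet, push everything back to $\Lambda_0$ via a monodromy identity for $S$, and telescope at the level of determinants) is the natural strategy and is almost certainly close to what \cite{ch-hi4} does.

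Two corrections and a caution. First, the monodromy relation you write, $S(e^{i\pi}\lambda)=2I-S(\lambda)^{-1}$, is missing the reflection operator: the identity used in this paper (quoted from \cite[Proposition 2.1 and (2.2)]{ch-hi4} in Sections \ref{s:boundary} and \ref{s:proofofthm}) is $S(re^{i\pi})=2I-\mcr S^*(r)\mcr$ on the positive real axis, whose analytic continuation is $S(e^{i\pi}\lambda)=2I-\mcr S(\lambda)^{-1}\mcr$ with $(\mcr f)(\theta)=f(-\theta)$. The $\mcr$'s alternate in the iteration, so your closed form really describes $\mcr^{m}S(e^{im\pi}\lambda)\mcr^{m}$ rather than $S(e^{im\pi}\lambda)$; fortunately $\det(\mcr A\mcr)=\det A$, so your telescoping determinant identity $\prod_{k=0}^{m-1}\det S(\lambda e^{ik\pi})=\det\bigl(mS(\lambda)-(m-1)I\bigr)$ survives, but the intermediate operator formula should be stated with the $\mcr$-conjugation. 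Second, the base case $\mu_R(\lambda_1 e^{i\pi})-\mu_R(\lambda_1)=m_{\text{sc}}(\det S(\lambda),\lambda_1)$ needs to hold for \emph{all} $\lambda_1\in\Lambda$ for your telescoping to run, not just for $\lambda_1$ on the physical sheet; the jump-formula/Birman–Krein heuristic you sketch uses unitarity of $S$ on the real axis, which only directly controls the $\Lambda_0\leftrightarrow\Lambda_{-1}$ transition. Extending this to all sheets (which is essentially the real content of \cite[Corollary 4.9]{ch-hi4}) is where the hard work lives, and it is not addressed by the sketch. So: right skeleton, correct telescoping algebra, but the two load-bearing ingredients — the $\mcr$-corrected monodromy and the sheet-uniform $m=1$ Gohberg–Sigal identity — are asserted rather than established.
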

Analogs of this result are well known in odd dimensions and for
even dimensions  for $m=1$ and 
a limited subset of $\Lambda$; see, for example, 
\cite{jensen, nedelec, pe-zwsc, s-t}.

The restrictions we have put on the operator $P$ mean that neither $R(\lambda)$
nor the scattering matrix has any poles in the physical region $\Lambda_0$.
Thus we see from Proposition \ref{p:chi-hi4}
that to study the resonances on $\Lambda_m$, $m\in \Natural$,
it will suffice to study the zeros of the scalar function 
\begin{equation}
\label{eq:fm}
f_m(\lambda) \defeq 
\det(mS(\lambda)-(m-1)I)
\end{equation} on the physical sheet $\Lambda_0$, which we shall
identify with the upper half plane of $\Complex$.  The function 
$f_m$ is holomorphic in this region.

In the next two sections we establish some properties of $f_m(\lambda)$ in
the upper half plane (corresponding to
$\Lambda_0$), and on its boundary.  In Section \ref{s:boundary} we show that if
$\arg \lambda=0$ or $\arg \lambda=\pi$, then
$|f_m(\lambda)|=O(|\lambda|^{d-1})$ when $|\lambda|\rightarrow \infty$.
   In Section \ref{s:lowerbounds}
we prove that there are constants 
$M,\;c_0>0$ so that for $\sigma>M>0$, $\log |f_m(e^{i \pi/2} \sigma)| \geq c_0 \sigma^d$.
Section \ref{s:ca} recalls a result of Govorov \cite{govorov}
for functions analytic in a half-plane and 
proves a consequence of this result 
which we shall need in the proof of the theorem.  Section 
\ref{s:proofofthm}  proves Theorem \ref{thm:main} by showing that the 
properties of $f_m$ established in Sections \ref{s:boundary} and 
\ref{s:lowerbounds} are inconsistent with having 
$\lim\sup_{r\rightarrow 0} \left( (\log r)^{-1} \log n_m(r)\right) <d$.

Although
the proof is different, both the result and some of the ideas underlying 
the proof of Theorem \ref{thm:main} are similar
to the results of \cite{chevenfxsgn}.
The paper \cite{chevenfxsgn}  shows that for scattering by fixed-sign
potentials in even dimensions
a lower bound like that of Theorem \ref{thm:main} holds.  
In both \cite{chevenfxsgn} and this paper, we study resonances on $\Lambda_m$
by studying zeros
of a function analytic on $\Lambda_0$. We use different complex-analytic 
results in the two papers
-- compare Govorov's results \cite[Theorem 3.3]{govorov} recalled here
in Theorem \ref{thm:govorov1}, to the results of 
\cite[Proposition 2.4]{chevenfxsgn}.  Additionally, the results 
we need for the behavior of $f_m(\lambda)=\det(mS(\lambda)-(m-1)I)$
on the boundary of $\Lambda_0$, proved here in Section \ref{s:boundary},
 are much more delicate in the obstacle
case than the corresponding results used in \cite{chevenfxsgn}.

We note that the paper
\cite{ch-hi2} proved, 
for a Schr\"odinger operator with a ``generic'' potential
$V\in L^\infty_0(\Real^d)$, lower bounds on the $m$th resonance counting function
of the type we prove here.

Section \ref{s:smat0} gives, for completeness, some basic results about
the behavior of $S(\lambda)$ near $|\lambda|=0$.

\vspace{2mm}

{\bf Acknowledgments.}  The author 
gratefully acknowledges the partial support of the NSF under grant 
DMS 1001156.  Thank you to Fritz Gesztesy for helpful conversations.

\section{A bound on $|\det(mS(\lambda)-(m-1)I)|$ for $\arg \lambda =0$
or $\arg \lambda=\pi$} \label{s:boundary}

This section uses  some results
of \cite{e-p1, e-p2, le-pe} on the 
one-sided accumulation of the eigenvalues
of the scattering matrix $S(\lambda)$ when $\arg \lambda =0$ and a 
sort of ``inside-outside duality''.   Here  ``inside-outside duality'' refers to
 a relation between the spectrum of the scattering
matrix for the exterior problem (on $\Real^d \setminus \overline{\mco}$)
and the spectrum of the interior operator, that is, the Laplacian with
corresponding boundary conditions on $\mco$.  To 
make then summary of the results which we shall need more readable, we
present them as two separate theorems, one for the Dirichlet boundary
condition and one for the Robin boundary condition. 

In this section we identify $\Lambda_0$ with the open upper half plane, and
similarly identify boundary points.  Hence $\lambda \in \Real_+$ corresponds
to a point in $\Lambda$ with argument $0$.  Recall that $S(\lambda)$ is
a unitary operator for $\lambda>0$.
\begin{thm}\label{thm:dirichlet}\cite{e-p1,e-p2,le-pe}
 Let $S(\lambda)$ denote the scattering matrix for $-\Delta$ on $\Real^d \setminus \overline{\mco}$ with 
Dirichlet boundary conditions.  Let $\lambda, \; \lambda_0 \in \Real_+$, 
$\epsilon>0$.
  Then $S(\lambda) $ has only finitely many eigenvalues 
with positive imaginary part.  Moreover, $S(\lambda)$ has
an eigenvalue $E(\lambda)$ depending continuously on $\lambda \in 
(\lambda_0-\epsilon, \lambda_0) $ with
 $\lim_{\lambda \uparrow \lambda_0}E(\lambda)=1$ 
and $\Im E(\lambda)>0$ for all
$ \lambda \in (\lambda_0-\epsilon, \lambda_0) $ if and only if $\lambda_0^2$ 
is an eigenvalue of $-\Delta$ on $\mco $ with Dirichlet boundary conditions.
Moreover, there is no pair $\lambda_0, \;E(\lambda)$ of $\lambda_0>0$ and 
 eigenvalue $E(\lambda)$ of $S(\lambda)$ 
depending continuously on $\lambda \in 
(\lambda_0,  \lambda_0+\epsilon )$
 so that $\Im_{\lambda \downarrow \lambda_0}E(\lambda)=1$ and
 $\Im E(\lambda)>0$ for all  $\lambda \in (\lambda_0,  \lambda_0+\epsilon )$.
\end{thm}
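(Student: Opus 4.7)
The plan is to follow the boundary integral framework of \cite{e-p1,e-p2,le-pe} and realize the scattering matrix as a perturbation of the identity built from the Herglotz wave operator and a boundary layer potential. Concretely, for the Dirichlet problem one writes a representation of the schematic form
\[
S(\lambda) - I = c_d \lambda^{d-2}\, H(\lambda)^{*} K(\lambda) H(\lambda),
\]
where $H(\lambda):L^2(\Sphere^{d-1})\to H^{1/2}(\partial\mco)$ sends $g$ to the boundary trace of the Herglotz wave $x\mapsto\int_{\Sphere^{d-1}} e^{i\lambda x\cdot\theta}g(\theta)\,d\sigma(\theta)$, and $K(\lambda)$ is the inverse of the boundary integral operator solving the exterior Dirichlet problem. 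All the relevant spectral behaviour of $S(\lambda)$ lives in $K(\lambda)$, which becomes singular precisely at those $\lambda_0>0$ for which an incoming Herglotz wave can be ``invisible'' to the obstacle, i.e.\ at the interior Dirichlet eigenvalues.

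For the finiteness claim I would use the unitarity of $S(\lambda)$ together with a sign analysis. Eigenvalues of $S(\lambda)$ lie on the unit circle, so $\Im E>0$ picks out an open arc, and by writing the eigenvalue problem through the representation above, the location on this arc is governed by the sign of a quadratic form of the form $\langle g, \Im K(\lambda) g\rangle$ on $L^2(\Sphere^{d-1})$. For the Dirichlet problem one shows that $\Im K(\lambda)$ has a fixed (negative) sign on a closed subspace of finite codimension, and the codimension is bounded by the number of Dirichlet eigenvalues of $-\Delta$ on $\mco$ below $\lambda^2$. Since the latter count is finite, only finitely many eigenvalues of $S(\lambda)$ can land on the upper arc.

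For the inside-outside duality, suppose $E(\lambda)\to 1$ from above as $\lambda\uparrow\lambda_0$, with eigenvector $\psi(\lambda)$. The associated Herglotz wave $u_\lambda := H(\lambda)\psi(\lambda)$ is an entire Helmholtz solution $-\Delta u_\lambda=\lambda^2 u_\lambda$ on $\Real^d$, and the eigenvalue equation for $\psi(\lambda)$ translates into a boundary condition on $u_\lambda|_{\partial\mco}$ whose coefficient is proportional to $E(\lambda)-1$. A normal-family argument extracts a nonzero limit $u_{\lambda_0}$ with $u_{\lambda_0}|_{\partial\mco}=0$, and its restriction to $\mco$ is a Dirichlet eigenfunction at eigenvalue $\lambda_0^2$. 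For the converse, an interior Dirichlet eigenfunction at $\lambda_0^2$ produces a singularity of $K(\lambda)$ at $\lambda=\lambda_0$, and analytic (Kato) perturbation of the corresponding eigenvalue of $K(\lambda)$ generates the required continuous branch $E(\lambda)\to 1$.

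The main obstacle is the last sentence of the theorem, i.e.\ the \emph{one-sidedness}: that $E(\lambda)\to 1$ with $\Im E>0$ happens only as $\lambda\uparrow\lambda_0$, never as $\lambda\downarrow\lambda_0$. This forces a definite direction of rotation for $E(\lambda)$ on the unit circle as $\lambda$ passes through $\lambda_0$. I would establish it by differentiating the eigenvalue relation at the crossing in the spirit of Hellmann--Feynman, expressing $\partial_\lambda E$ at $\lambda_0$ in terms of $\partial_\lambda K(\lambda)|_{\lambda=\lambda_0}$ paired with the Dirichlet eigenfunction $u_{\lambda_0}|_\mco$; the resulting sign reduces to that of $\partial_\lambda(\lambda^2)=2\lambda>0$ weighted by a positive quadratic form (essentially the Rayleigh quotient of the Dirichlet eigenfunction). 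Pinning down this sign cleanly, especially in the presence of higher multiplicity or several branches crossing $1$ at the same $\lambda_0$, is the delicate technical point in the proof.
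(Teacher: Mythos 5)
The paper does not prove Theorem \ref{thm:dirichlet}; it is quoted verbatim from the literature, with the attribution \cite{e-p1,e-p2,le-pe} in the theorem header and a subsequent paragraph explaining that \cite{e-p1} proved the Dirichlet duality in dimension $2$, \cite{le-pe} proved it in dimension $3$, and the argument of \cite{le-pe} extends to general $d\geq 2$ with straightforward modifications. So there is no in-paper proof to compare your attempt against.

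That said, your sketch is broadly consistent with the framework of those references: one factors the far-field (equivalently $S(\lambda)-I$) through a Herglotz operator and a boundary integral operator, extracts the finiteness of upward-pointing eigenvalues from a one-sided sign condition on that factorization modulo finite rank, and obtains the interior eigenvalue correspondence by tracking when the boundary operator degenerates. Two points where your outline is thin relative to what must actually be done. First, the one-sidedness (the last sentence of the theorem, and the assertion that the accumulation at $1$ happens only as $\lambda\uparrow\lambda_0$) is not a single Hellmann--Feynman derivative computation at a simple crossing; in \cite{e-p1,le-pe} it rests on a global monotonicity statement for the phases of the far-field eigenvalues as functions of $\lambda$, which must be formulated carefully precisely because eigenvalue branches can be created or annihilated at $1$ (indeed the present paper emphasizes in the proof of Proposition \ref{p:summod2pi} that such branches can ``disappear''), so you cannot assume a single continuously parametrized crossing. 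Second, the finiteness-modulo-finite-rank sign statement needs to be stated for the correct operator (for Dirichlet it is the real part of the factorized middle operator, not its imaginary part, that has a fixed sign up to finite rank, with the finite-rank defect controlled by the interior eigenvalue count below $\lambda^2$); your phrasing ``$\Im K(\lambda)$ has a fixed (negative) sign on a closed subspace of finite codimension'' would need to be adjusted to match the actual decomposition. As a high-level reconstruction of the cited argument your proposal is on the right track, but neither of these two technical pillars is actually established, and both are where the real work of \cite{e-p1,le-pe} lies.
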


The results for the Robin-type boundary condition are similar, but the 
direction of accumulation of the eigenvalues and of the limits is different.
In the statement of the theorem, one should understand that if the boundary 
condition for the exterior problem is 
$$h(x) u (x)+\frac{\partial}{\partial n}u(x) =0 \; \text{on}\;\partial (\Real^d \setminus 
\overline{\mco})$$
with $n$ the outward pointing unit normal to
 $\partial(\Real^d \setminus \overline{\mco})$,
then the boundary condition for the interior problem is 
$$h(x) u (x)+\frac{\partial}{\partial n}v(x) =0 \; \text{on}\;\partial (
\overline{\mco})$$
 where $n$ remains the outward pointing unit normal to
 $\partial(\Real^d \setminus \overline{\mco})$.
\begin{thm}\cite{e-p2,le-pe}\label{thm:neumann}
Let $S(\lambda)$ denote the scattering matrix for $-\Delta$ on $\Real^d \setminus \overline{\mco}$ with 
Robin-type boundary conditions for 
an admissable function $h\in C^1(\partial \mco)$, $h\geq 0$. 
 Let $\lambda, \lambda_0 \in \Real_+$.
  Then $S(\lambda) $ has only finitely many eigenvalues 
with negative imaginary part.  Moreover, $S(\lambda)$ has
an eigenvalue $E(\lambda)$ depending continuously on $\lambda \in 
(\lambda_0, \lambda_0+ \epsilon) $ with 
$\lim_{\lambda \downarrow \lambda_0}E(\lambda)=1$ and $\Im E(\lambda)<0$ for all $
 \lambda \in ( \lambda_0,\lambda_0+\epsilon,) $ if and only if $\lambda_0^2$ 
is an eigenvalue of $-\Delta$ on $\mco $ with Robin-type boundary conditions.
Moreover, there is no pair $\lambda_0, \;E(\lambda)$ of $\lambda_0>0$ and 
 eigenvalue $E(\lambda)$ of $S(\lambda)$ 
depending continuously on $\lambda \in 
( \lambda_0- \epsilon, \lambda_0)$
 so that $\lim_{\lambda \uparrow \lambda_0}E(\lambda)=1$ and
 $\Im E(\lambda)<0$ for $\lambda \in ( \lambda_0- \epsilon, \lambda_0).$
\end{thm}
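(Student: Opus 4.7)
The plan is to adapt the proof of Theorem \ref{thm:dirichlet} from \cite{e-p1,e-p2,le-pe} to the Robin setting, tracking how the admissibility sign $h\geq 0$ reverses the side on which eigenvalues of $S(\lambda)$ accumulate at $1$ on the unit circle.

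First, for $\lambda>0$ I would set up a boundary representation of the scattering matrix adapted to the Robin condition. Applying Green's formula to any $L^2$ solution of $(-\Delta-\lambda^2)u=0$ on $\Real^d\setminus\overline{\mco}$ satisfying the exterior Robin condition, and expressing everything in terms of the boundary trace, yields an expression of the form
$$S(\lambda)-I \;=\; c_d\,\lambda^{d-2}\, G(\lambda)\bigl(N_h^{\rm ext}(\lambda)-N_h^{\rm int}(\lambda)\bigr)^{-1}G(\lambda)^{*},$$
where $N_h^{\rm ext}(\lambda),N_h^{\rm int}(\lambda)$ are the exterior and interior ``Robin-to-Neumann'' operators on $\partial\mco$ attached to $(\partial_n+h)u=0$, and $G(\lambda)\colon L^2(\partial\mco)\to L^2(\Sphere^{d-1})$ is a trace-class operator built from the outgoing free Green's function restricted to $\partial\mco$ (cf.\ \cite{e-p2}). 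The meromorphic structure of $S(\lambda)$ on $\Real_+$ then comes entirely from poles of $\bigl(N_h^{\rm ext}-N_h^{\rm int}\bigr)^{-1}$, which by a Birman--Schwinger-style calculation are exactly the $\lambda_0>0$ with $\lambda_0^{2}$ an interior Robin eigenvalue on $\mco$.

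Second, using that $S(\lambda)$ is unitary for $\lambda>0$, so that its eigenvalues lie on the unit circle and can be followed continuously in $\lambda$ between crossings, I would apply first-order perturbation theory to $S(\lambda)v(\lambda)=E(\lambda)v(\lambda)$ near $\lambda_0$. A residue computation in the representation above yields, to leading order,
$$E(\lambda)-1 \;\sim\; i\,c_d'\,(\lambda-\lambda_0)^{-1}\bigl\langle \phi,\,M_h\,\phi\bigr\rangle,$$
where $\phi$ is the boundary trace of the interior Robin eigenfunction at $\lambda_0$ and $M_h$ is a real quadratic form whose overall sign is fixed by the sign of $h$. In the Dirichlet analysis of \cite{e-p1} the analogous form carries the opposite sign; here admissibility $h\geq 0$ flips it, forcing $\Im E(\lambda)<0$ for $\lambda$ just above $\lambda_0$. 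This gives the ``if'' direction of the main equivalence, while the fixed sign of the crossing rules out a branch $E(\lambda)\to 1$ with $\Im E<0$ as $\lambda\uparrow \lambda_0$.

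Third, finiteness of eigenvalues of $S(\lambda)$ with $\Im E<0$ and the converse implication both fall out of this picture: $S(\lambda)-I$ is trace class, only finitely many interior Robin eigenvalues lie below any given $\lambda^{2}$, and only a completed ``downward'' crossing of $1$ can place an eigenvalue in the lower half circle, so all but finitely many eigenvalues cluster at $1$ from above; conversely, a continuous branch with $E(\lambda)\to 1$, $\Im E<0$ as $\lambda\downarrow\lambda_0$ forces, via the perturbative formula, a pole of $(N_h^{\rm ext}-N_h^{\rm int})^{-1}$ at $\lambda_0$, hence an interior Robin eigenvalue. The main obstacle is the second step: correctly identifying the sign of $\langle \phi,M_h\phi\rangle$ for every interior Robin eigenfunction $\phi$ and every admissible $h\geq 0$, uniformly in $\lambda_0$. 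This is exactly the point at which the inside--outside duality has to be checked by a careful Green's-formula computation on $\partial\mco$, and where the sign reversal between Theorems \ref{thm:dirichlet} and \ref{thm:neumann} is genuinely realized.
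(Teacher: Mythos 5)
The paper does not actually prove Theorem~\ref{thm:neumann}; it is cited as a known result, with the interior--exterior duality attributed to \cite{e-p2} (for Neumann, $d=2$) and \cite{le-pe} (for Robin, $d=3$), together with the remark that the argument of \cite{le-pe} carries over to all $d\geq 2$ with straightforward modifications. So there is no internal proof to compare against, and what you are really proposing is a reconstruction of the argument from those references.

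As such a reconstruction your sketch is broadly in the right spirit --- a boundary-operator representation of $S(\lambda)-I$, tracking eigenvalues on the unit circle, and a residue/sign computation at an interior eigenvalue --- but it has genuine gaps. First, the representation $S(\lambda)-I = c_d\,\lambda^{d-2}G(\lambda)\bigl(N_h^{\rm ext}-N_h^{\rm int}\bigr)^{-1}G(\lambda)^*$ is asserted rather than derived, and is not the form actually used in \cite{e-p2,le-pe}; getting a formula that makes the subsequent residue computation clean (and whose trace-class and self-adjointness properties you can quote) is a nontrivial part of the work. Second, the finiteness of eigenvalues of $S(\lambda)$ with $\Im E<0$ at a \emph{fixed} $\lambda>0$ is a separate one-sided-accumulation statement (in the spirit of \cite{yacs}) and does not follow from the crossing analysis alone; your argument ``only a completed downward crossing can place an eigenvalue in the lower half circle'' presupposes the very one-sidedness you are trying to establish. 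Third, the sign reversal relative to Theorem~\ref{thm:dirichlet} is intrinsic to trading a Dirichlet for a Neumann/Robin boundary condition, not to the sign of $h$; the admissibility hypothesis $h\geq 0$ is what guarantees the Robin case behaves like Neumann and that the quadratic form you call $M_h$ keeps a fixed sign, and this is precisely the step you flag but do not carry out. Until that computation is done the ``if and only if'' and the non-existence of an upward branch remain unproved.
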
  Results on the one-sided accumulation of eigenvalues of the 
scattering matrix can be found in \cite{e-p1, e-p2,le-pe},
with related results in, for example,
 \cite{yacs}.  The ``interior-exterior duality'' 
part of
Theorem \ref{thm:dirichlet} was proved in dimension $2$ in 
\cite{e-p1}, and then Theorem \ref{thm:neumann} was
proved for the Neumann boundary condition, again
in dimension $d=2$, in \cite{e-p2}.  The paper \cite{le-pe} proves
both Theorems \ref{thm:dirichlet} and \ref{thm:neumann} in dimension
$d=3$.  However, the proof of \cite{le-pe} works in general dimension
$d\geq 2$ with straightforward modifications.

This next proposition is central to the proof of Theorem \ref{thm:main}.
\begin{prop}\label{p:summod2pi} Let $S$ be the scattering 
matrix for the operator $P$
(with either the Dirichlet or admissable Robin boundary condition in
the exterior of an obstacle $\mco$)
and let $r>0$.  Let $\{ e^{i\theta_\alpha(r)}\}_{\alpha \in \mca}=
 \{ e^{i\theta_\alpha(r)}\}_{\alpha \in \mca(r)}$ be 
the eigenvalues of $S(r)$, repeated according to their  multiplicity.  Then
$$\sum_{\alpha \in \mca} \left( \inf_{k\in \Integers} \{ |\theta_\alpha(r)-2\pi k| \}  \right)
=O(r^{d-1})\; \text{as }\; r\rightarrow \infty.$$
\end{prop}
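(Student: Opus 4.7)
The plan is to recognize that, up to an absolute constant, the sum in the proposition is just the trace norm $\|S(r)-I\|_1$, and then to invoke the classical bound $\|S(r)-I\|_1 = O(r^{d-1})$ known for obstacle scattering.

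\textbf{Step 1: reduction to the trace norm.}  For $r>0$ the scattering matrix $S(r)$ is unitary, so $S(r)-I$ is a normal operator whose spectrum is exactly $\{e^{i\theta_\alpha(r)}-1\}_{\alpha\in\mca(r)}$.  Choosing the representatives $\theta_\alpha(r)\in(-\pi,\pi]$ gives $\inf_{k\in\Integers}|\theta_\alpha(r)-2\pi k|=|\theta_\alpha(r)|$.  Because $S(r)-I$ is normal, its trace-class norm (when finite) equals the sum of the moduli of its eigenvalues:
$$\|S(r)-I\|_1 \;=\; \sum_\alpha |e^{i\theta_\alpha(r)}-1| \;=\; 2\sum_\alpha\bigl|\sin(\theta_\alpha(r)/2)\bigr|.$$
The elementary bound $2|\sin(x/2)|\geq(2/\pi)|x|$ for $|x|\leq\pi$ then yields
$$\sum_\alpha \inf_{k\in\Integers}|\theta_\alpha(r)-2\pi k| \;=\; \sum_\alpha|\theta_\alpha(r)| \;\leq\; \frac{\pi}{2}\,\|S(r)-I\|_1.$$

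\textbf{Step 2: the trace-norm estimate $\|S(r)-I\|_1 = O(r^{d-1})$.}  This is a classical bound in obstacle scattering, reflecting the fact that at energy $r^2$ only $O(r^{d-1})$ angular modes on $\Sphere^{d-1}$ interact appreciably with $\mco$.  The argument proceeds by expressing $S(\lambda)-I$ via boundary layer operators on $\partial\mco$ (the precise form differs for Dirichlet and admissable Robin conditions but is uniform in the relevant sense), and then decomposing the result using spherical harmonics on $\Sphere^{d-1}$.  The harmonics of degree $\ell \leq C r$ give a finite-rank contribution of rank $O(r^{d-1})$ and bounded operator norm, while those of degree $\ell \gg r\,\diam(\mco)$ are essentially in the kernel of $S(r)-I$ thanks to the rapid decay of the free resolvent applied to high-index Bessel modes on a compact set.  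Adding the trace-norm contributions yields the $O(r^{d-1})$ bound.  Combined with Step 1, this completes the proof.

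\textbf{Expected main obstacle.}  The arithmetic of Step 1 is essentially automatic; the real content lies in the trace-norm estimate of Step 2.  Although classical, its derivation must be done with enough uniformity to cover both Dirichlet and admissable Robin boundary conditions simultaneously, and one must be careful that the constants in the spherical-harmonic decomposition depend only on $\mco$ and (for Robin) on $h$, not on $r$.  This is where the bulk of the technical work lives; the reduction to $\|S(r)-I\|_1$ is merely what converts the geometric condition \emph{on the spectrum of $S(r)$} into an analytic statement amenable to those standard tools.
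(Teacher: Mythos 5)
Your Step 1 is a correct (and near-tautological) observation: choosing representatives $\theta_\alpha(r)\in(-\pi,\pi]$ and using $2|\sin(\theta/2)|\asymp|\theta|$ on $[-\pi,\pi]$ shows that the sum in the proposition is comparable, with absolute constants, to $\|S(r)-I\|_1$. But this means Step 2 \emph{is} the proposition, restated, and you have not actually proved it; you have only renamed it and appealed to a ``classical'' bound.

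I do not believe the estimate $\|S(r)-I\|_1=O(r^{d-1})$ is a citable classical fact for general obstacles with Dirichlet or admissable Robin boundary conditions, and the paper gives strong circumstantial evidence to the contrary: the author explicitly remarks in the introduction that the boundary estimates of Section~\ref{s:boundary} ``are much more delicate in the obstacle case'' than the analogous ones for compactly supported potentials, and then spends the whole of Section~\ref{s:boundary} proving this proposition from scratch via inside--outside duality (Theorems~\ref{thm:dirichlet} and~\ref{thm:neumann}), the Weyl law for the interior Dirichlet/Robin eigenvalues, and the Weyl asymptotics for the total scattering phase (equation~(\ref{eq:weyl})). The crucial mechanism is a cancellation of two $r^d$ terms --- the interior eigenvalue count and the scattering phase --- which leaves an $O(r^{d-1})$ remainder; this is precisely what your ``classical'' invocation skips.

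Your heuristic sketch for Step 2 also has a concrete problem: for a general obstacle the scattering matrix does \emph{not} block-diagonalize on spherical harmonics, so the statement ``harmonics of degree $\ell\gg r\diam(\mco)$ are essentially in the kernel of $S(r)-I$'' is not a decomposition you can apply directly to $S(r)-I$. Moreover, running the natural trace-ideal estimate through the Petkov--Zworski representation of $S(\lambda)-I$ (Proposition~\ref{p:pe-zw}) --- bounding the outer factors in Hilbert--Schmidt norm and the cutoff resolvent in operator norm --- does not obviously produce the exponent $d-1$, and in the trapping case the cutoff resolvent bound is uncontrolled. So what remains of your proposal, after the reduction in Step 1, is the hardest part of the claim with no proof supplied; that is a genuine gap, not a shortcut.
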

Before proving the proposition, we make a comment about the choice
of notation $\{ e^{i\theta_\alpha(r)}\}_{\alpha \in \mca}$. 
The eigenvalues of $S(r)$ are a countable set.
 However, in our proof it will be convenient to choose
the phases $\theta_\alpha(r)$ to be continuous
functions of $r$  when possible.  This is
possible when $e^{i\theta_\alpha(r) }$ is away from $1$.  But it can 
happen that there is an $r_0>0 $ and
an  eigenvalue $E(r)$ of $S(r)$, chosen continuous on $(r_0-\epsilon, r_0)$,
so that $\lim_{r\uparrow r_0}E(r)=1$ but $1$ is {\em not} an eigenvalue
of $S(r_0)$.  See \cite[Section 2]{e-p1}
 for examples and further discussion.  Hence 
an eigenvalue of the scattering matrix can ``disappear.''  With the 
notation $\{e^{i\theta_{\alpha}(r)}\}_{\alpha \in \mca(r)}$ we wish
to indicate the possibility of using different indexing sets for different
values of $r$.  
\begin{proof}
There is a great deal of flexibility in choosing the set of ``phases''
$\{ \theta_\alpha(\lambda)\}$; note that making a different choice (for 
example, adding an integral multiple of $2\pi$ to one or
more of the phases) does not change the value
of the  
sum in the statement of the proposition.  To prove the proposition, we 
shall make a convenient choice of this set.

For $\lambda >0$, let $\{ e^{i\theta_\alpha(\lambda)}\}$ be the eigenvalues of
$S(\lambda)$.  It is possible to 
make a choice of the set $\{\theta_\alpha(\lambda)\}$ so
that each $\theta_\alpha$ is a continuous function of $\lambda$, except, 
perhaps, where $e^{i\theta_\alpha(\lambda)}$ 
is $1$ or approaches $1$.  In the proof we choose  
 each ``phase'' $\theta_\alpha(\lambda)$
be continuous as a function of $\lambda$ except, possibly, at 
points where a one-sided limit of $e^{i\theta_\alpha(\lambda)}$ is $1$.
Moreover, we choose each phase to be defined on a maximal open interval
in $(0,\infty)$, 
so that if $\theta_\alpha$ is defined on $(\lambda_0, \lambda_1)$,
$0\leq \lambda_0<\lambda _1\leq \infty$, if $\lambda_0>0$  then
$\lim_{\lambda\downarrow \lambda_0}e^{i\theta_\alpha (\lambda)}=1.$  Similarly, if
$\lambda_1<\infty$, then $\lim_{\lambda\uparrow \lambda_1}e^{i\theta_\alpha
(\lambda)}=1.$

Additionally, we require that $\theta_\alpha(\lambda)
\in (-2\pi, 2\pi)$.  We may impose another condition on the set
$\{ \theta_\alpha\}$.  We require that if $\theta_{\alpha_0}(\lambda)$ is 
continuous on $(\lambda_0, \lambda_1), $ $0\leq \lambda_0<\lambda_1$,
 and $e^{i\theta_{\alpha_0} (\lambda)}$
is not $1$ on that same interval, but $\lim_{\lambda\downarrow \lambda_0}
e^{i\theta_{\alpha_0}(\lambda)}=1$, then 
$\lim_{\lambda\downarrow \lambda_0} \theta_{\alpha_0}(\lambda)=0$.  We note that
in particular this means that that if $\theta_\alpha$ is defined
on $(0,\lambda_0)$, some $\lambda_0>0$, then 
$\lim_{\lambda \downarrow 0} \theta_{\alpha}(\lambda)=0$; see Corollary 
\ref{cor:lat0}.

Having chosen these conventions, 
\begin{align} \label{eq:scph}
\frac{1}{2\pi i} \int_0^r \frac{d}{d\lambda} \log \det S(\lambda) d\lambda 
  = \frac{1}{2\pi} \sum \theta_\alpha(r) + N(2\pi, r) - N(-2\pi, r) +O(1)
\end{align}
as $r\rightarrow \infty$.  Here we use the notation
\begin{multline*}
N(\pm 2\pi, r) \\
=\#\{ \lambda_0,\; 0<\lambda_0 \leq r\; : \lim_{\lambda 
\uparrow \lambda_0}\theta_\alpha(\lambda) = \pm 2\pi\; 
\text{for some $\alpha$, counted with multiplicity }\}.
\end{multline*}

Now we specialize to the case of admissable Robin-type boundary conditions.
By results of \cite{e-p2, le-pe} recalled here in Theorem \ref{thm:neumann},
\begin{align*} & 
N(-2\pi, r)+ \# \{ \alpha \in \mca(r): \theta_\alpha(r)<0\}\\
& = \# \{ \lambda: 0<\lambda \leq r\; \text{and}\; 
\lambda^2 \; \text{is a Robin eigenvalue of $-\Delta$ on $\mco$ } \}+O(1).
\end{align*}
By the well-known Weyl formula for the Laplacian on a bounded open set with
smooth boundary,
this means
\begin{equation}\label{eq:negative}
N(-2\pi, r) + \# \{ \alpha \in \mca(r): \theta_\alpha(r)<0\} =
 c_d \vol(\mco) r^d+O(r^{d-1})
\end{equation}
where $c_d$ is the $d$-dimensional Weyl constant.
Then
$$ \frac{1}{2\pi i} \int_0^r \frac{d}{d\lambda} \log \det S(\lambda) d\lambda
\geq \frac{1}{2\pi} \sum_{\theta_\alpha (r)<0} \theta_\alpha(r) - N(-2\pi, r) +O(1)
\geq - c_d \vol(\mco ) r^{d}+O(r^{d-1}).$$
On the other hand \cite{buslaev, compsupp, m-r, melrose, robert}
\begin{equation}\label{eq:weyl}
\frac{1}{2\pi i} \int_0^r \frac{d}{d\lambda} \log \det S(\lambda) d\lambda= 
-c_d \vol(\mco) r^d+ O(r^{d-1} ) \; \text{as} \; r\rightarrow \infty .
\end{equation}
Thus we must have 
 \begin{equation}\label{eq:negph}
\frac{1}{2\pi}\sum_{\theta_\alpha (r)<0} \theta_\alpha(r) - 
N(-2\pi, r)= -c_d \vol(\mco) r^d+ O(r^{d-1} )
\end{equation}
and this, together with (\ref{eq:negative}), means that
$$\sum_{\theta_\alpha(r)<0} (1+\frac{1}{2\pi} \theta_\alpha(r))= O(r^{d-1})$$
or
\begin{equation}\label{eq:negest}
\sum_{\theta_\alpha(r)<0} |\theta_\alpha(r)+2\pi| = O(r^{d-1}).
\end{equation}

Using (\ref{eq:scph}) and (\ref{eq:weyl}- \ref{eq:negest}), we obtain that
$$\frac{1}{2\pi}\sum_{\theta_\alpha(r)>0} \theta_\alpha(r)+ N(2\pi, r) = 
O(r^{d-1}).$$
This finishes the proof of the proposition for the Robin case.

The proof for the Dirichlet case is similar, using Theorem \ref{thm:dirichlet}.
\end{proof}

\begin{prop}
Let $m\in \Natural$, and $r>0$.  
Then there is a constant $C>0$ so that 
$$1\leq |\det(mS(r)-(m-1)I)|\leq \exp(C r^{d-1})\; \text{for $r$ sufficiently
large}$$
and 
$$|\det(mS(re^{i\pi})-(m-1)I)|\leq \exp(C r^{d-1})\; \text{for $r$ sufficiently
large}.$$
\end{prop}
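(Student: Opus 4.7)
The plan is to treat the two boundary rays of $\Lambda_0$ separately by factoring the Fredholm determinant through the eigenvalues of $S(\lambda)$. On the ray $\lambda = r > 0$, $S(r)$ is unitary, so its eigenvalues are the $\{e^{i\theta_\alpha(r)}\}_{\alpha \in \mca(r)}$ of Proposition \ref{p:summod2pi} and
$$\det(mS(r) - (m-1)I) = \prod_\alpha \bigl(me^{i\theta_\alpha(r)} - (m-1)\bigr).$$
A direct computation gives $|me^{i\theta} - (m-1)|^2 = 1 + 4m(m-1)\sin^2(\theta/2)$, which is at least $1$ for every $\theta \in \Real$ and every $m \in \Natural$. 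Each factor of the product therefore has modulus at least $1$, yielding the lower bound $|\det(mS(r) - (m-1)I)| \geq 1$ immediately.

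For the matching upper bound on $\arg = 0$, I would use the same identity together with $\log(1+x) \leq x$ and $\sin^2(\theta/2) \leq \theta^2/4$ to estimate
$$\log|me^{i\theta_\alpha(r)} - (m-1)| \leq \tfrac{1}{2}m(m-1)\,\tilde{\theta}_\alpha(r)^2 \leq C\,\tilde{\theta}_\alpha(r),$$
where $\tilde{\theta}_\alpha(r) = \inf_{k\in\Integers} |\theta_\alpha(r) - 2\pi k| \leq \pi$; replacing $\theta_\alpha(r)$ by the nearest multiple of $2\pi$ leaves the left side unchanged. Summing over $\alpha$ and invoking Proposition \ref{p:summod2pi}, one obtains
$$\log|\det(mS(r) - (m-1)I)| \leq C\sum_\alpha \tilde{\theta}_\alpha(r) = O(r^{d-1}),$$
which is the desired upper bound at $\arg \lambda = 0$.

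For $\arg \lambda = \pi$ unitarity of $S$ is lost, but the meromorphic continuation to $\Lambda$ keeps $S(\lambda) - I$ trace-class on $L^2(\Sphere^{d-1})$. The plan is to apply the standard Fredholm inequality $|\det(I+T)| \leq \exp(\|T\|_1)$ with $T = m(S(re^{i\pi}) - I)$ and then bound $\|S(re^{i\pi}) - I\|_1 = O(r^{d-1})$, either through an analog of Proposition \ref{p:summod2pi} on the ray $\arg = \pi$ (built from asymptotics of $\det S$ continued to that ray) or via a symmetry identity relating $S(re^{i\pi})$ to $S(r)$ across the two boundaries of $\Lambda_0$, then combining with the $\arg = 0$ estimate already obtained. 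The principal obstacle lies exactly here: without unitarity one cannot describe the spectrum of $S(re^{i\pi})$ as a set of unit-circle phases, and the one-sided eigenvalue-accumulation arguments (Theorems \ref{thm:dirichlet} and \ref{thm:neumann}) that drive Proposition \ref{p:summod2pi} do not transfer to the $\arg = \pi$ ray without additional input. By contrast, the $\arg = 0$ case reduces cleanly to Proposition \ref{p:summod2pi} and the elementary estimate on $|me^{i\theta} - (m-1)|$.
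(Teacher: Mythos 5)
Your treatment of the ray $\arg\lambda=0$ is correct and in essence the same as the paper's: both diagonalize $S(r)$ using its unitarity, factor the determinant as $\prod_\alpha(1+m(e^{i\theta_\alpha(r)}-1))$, compute $|1+m(e^{i\theta}-1)|^2 = 1+4m(m-1)\sin^2(\theta/2)\geq 1$ for the lower bound, and for the upper bound reduce everything to Proposition \ref{p:summod2pi}. The paper reaches the upper bound by splitting the product into phases $|\theta_\alpha|>1/8m$ (finitely many up to a factor $O(r^{d-1})$, each bounded factor) and $|\theta_\alpha|\leq 1/8m$ (Taylor expansion of the logarithm); your version using $\log(1+x)\leq x$ and $\sin^2\leq(\cdot)^2$ uniformly is a cosmetic variant of the same estimate, and both terminate in the same appeal to $\sum_\alpha\inf_k|\theta_\alpha(r)-2\pi k|=O(r^{d-1})$.

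The gap is, as you yourself flagged, the ray $\arg\lambda=\pi$, where your proposal stops short of a proof. Neither of your suggested routes works as stated: the Fredholm-determinant bound $|\det(I+T)|\leq\exp(\|T\|_1)$ requires control of $\|S(re^{i\pi})-I\|_1$, but since $S(re^{i\pi})$ is not normal there is no immediate passage from eigenvalues to singular values, and you have no independent trace-norm estimate there; and an ``analog of Proposition \ref{p:summod2pi} on the ray $\arg=\pi$'' does not exist in the paper and cannot be built from Theorems \ref{thm:dirichlet} and \ref{thm:neumann}, which live on the positive real axis. What actually closes the case is the concrete functional equation $S(re^{i\pi}) = 2I - \mcr S^*(r)\mcr$, cited in the paper as \cite[Proposition 2.1]{ch-hi4} (the ``symmetry identity'' you only gestured at). With it,
$$\det(mS(re^{i\pi})-(m-1)I) = \det((m+1)I - m\mcr S^*(r)\mcr) = \det((m+1)I - mS^*(r)) = \prod_j\bigl(1 - m(e^{-i\theta_j(r)}-1)\bigr),$$
which is again a product over the \emph{same} unit-circle phases $\theta_j(r)$ of $S(r)$, and the identical estimate (now with $|1-m(e^{-i\theta}-1)|^2 = 1+4m(m+1)\sin^2(\theta/2)$) together with Proposition \ref{p:summod2pi} gives the $\exp(Cr^{d-1})$ bound. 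So the missing ingredient is a specific known identity, not a new estimate; once you invoke it, the second half of the proposition reduces to the first.
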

We note that the constant $C$ depends on $m$ as well as on $\mco$ and the 
boundary condition.
\begin{proof}
We denote the set of eigenvalues of the scattering matrix, repeated according
to their multiplicity, by 
$S(r)$ by $\{ e^{i \theta_j(r)}\}_{j\in \Natural}$.  Since this time 
the we do not require continuity properties of $\theta_j$ we use the
index set $\Natural$.   Then
\begin{equation}\label{eq:det1}
\det(mS(r)-(m-1)I)= \prod ( 1+ m(e^{i\theta_j(r)}-1)).
\end{equation}

Again, we make use of the fexibility in choosing the set $\{ \theta_j(r)\}$.
Here we do {\em not} need continuity properties, so we can assume, without 
loss of generality, that
$$-\pi \leq \theta_j(r) < \pi.$$
We shall split the product in (\ref{eq:det1}) into two pieces, depending
on the size of $|\theta_j(r)|$.  From Proposition \ref{p:summod2pi}, the
number of $j$ so that $|\theta_j(r)|> \epsilon>0$ is 
$O_\epsilon(r^{d-1})$.  
Since $1\leq | 1+m(e^{i\theta_j}-1)|\leq 1+2m,$ 
$$
0\leq \log\left|\prod_{ |\theta_j(r)| >1/8m}  ( 1+ m(e^{i\theta_j(r)}-1))\right|
= \sum_{|\theta_j(r)| >1/8m} \log | ( 1+ m(e^{i\theta_j(r)}-1))| 
= O(r^{d-1}).$$
Now 
\begin{align*}
\prod_{|\theta _j(r)|\leq 1/8m} ( 1+ m(e^{i\theta_j(r)}-1))
& = \exp\left( \sum _{ |\theta_j(r)| \leq 1/8m} 
\log ( 1+ m(e^{i\theta_j(r)}-1) )\right)
\\ & = \exp\left( \sum_{|\theta_j(r)|\leq 1/8m}(i m \theta_j(r) +O(|\theta_j(r)|^2)
\right)
\end{align*}
Of course, $\left| \exp \left( \sum_{|\theta_j(r)|<1/8m} i m \theta_j(r) \right) 
\right| =1$.  Moreover, since if $|\theta_j(r)|< 1/8m$, then
$|\theta_j(r)|^2<| \theta_j(r)|$, 
and by Proposition 
 \ref{p:summod2pi}
$$\sum _{|\theta_j(r)|\leq 1/8m} \left|   \theta_j(r)\right| ^2
\leq \sum_{|\theta_j(r)|\leq 1/8m}   |\theta_j(r)|  = O(r^{d-1})
$$
we have that the term $\sum_{|\theta_j(r)|\leq 1/8m} O(|\theta_j(r)|^2)
= O(r^{d-1}).$
This finishes the proof of the first statement.

To prove the second inequality in the proposition, note that by 
\cite[Proposition 2.1]{ch-hi4}, $S(re^{i\pi})=2I -\mcr S^*(r)\mcr$, 
where $\mcr:L^2(\Sphere^{d-1})\rightarrow L^2(\Sphere^{d-1})$ is 
$(\mcr f)(\theta)=f(-\theta)$, and $S^*$ is the adjoint of $S$.  Hence
\begin{align*}
\det(mS(re^{i\pi})-(m-1)I) & = \det (m (2I-\mcr S^*(r) \mcr)-(m-1)I)\\
 & = \det((m+1)I -m\mcr S^*(r)\mcr)\\
 & = \det ( (m+1)I-mS^*(r)).
\end{align*}
Hence, if as before we denote the eigenvalues of $S(r)$ by
 $\{ e^{i\theta_j(r)}\}$, then 
$$\det(mS(re^{i\pi})-(m-1)I)= \prod \left( 1-m(e^{-i\theta_j(r)}-1)\right)$$
and the proof of the second inequality follows essentially the same way as 
the proof of the first one.
\end{proof}

\section{Lower bounds on $|\det(mS(\sigma e^{i\pi/2})-(m-1)I)|$ when $\sigma
\rightarrow \infty$, $\sigma \in (0,\infty)$} 
\label{s:lowerbounds}
The main result of this section is Proposition \ref{prop:lowerbd}, which
provides a lower bound on $|\det (mS(e^{i\pi/2}\sigma)-(m-1)I)|$ when $\sigma>0$,
$\sigma \rightarrow \infty$.  The 
proof of this proposition uses three main ideas: 
the fact that $S(i\sigma)-I$ has purely imaginary eigenvalues when $d$ is even
and $\sigma$ is sufficiently large;
a monotonicity-type result of \cite{beale,la-ph}; and explicit calculations
in the case of a ball along with properties of Bessel functions.

In this section we work in $\Lambda_0$, which we identify with the open upper
half plane of $\Complex$.  Hence for $\sigma>0$, $i\sigma$ corresponds
to $e^{i\pi/2}\sigma$.

In this section we shall make use of some results of \cite{beale, la-ph}.  
We note that as our choice of the physical half plane is different from
theirs (we choose $0<\arg \lambda<\pi$ as the physical region,
they choose $-\pi<\arg \lambda <0$) some notation will be a bit different.

We recall some basic definitions related to 
the scattering matrix. 
For $\lambda \in \Complex $ with $0\leq \arg \lambda\leq \pi$ and $\omega
\in \Sphere^{d-1}$,
there is a unique solution to the equation
$$(-\Delta -\lambda^2) v=0\; \text{in}\;  \Real^d \setminus \mco$$
satisfying either the boundary condition (Dirichlet type)
$$v\restrict_{ \partial \mco} = e^{-i\lambda x\cdot \omega }\restrict_{
\partial \mco}$$
{\em or }  satisfying the Robin-type boundary condition
$$h(x) v(x)+\frac{\partial v(x)}{\partial n}= h(x) e^{-i\lambda x\cdot \omega }
+
\frac{\partial }{\partial n}e^{-i\lambda x\cdot \omega } \; \text{on}\; \partial \mco.$$  Here $h\in C^1(\partial \mco)$, $h\geq 0$, and  $n$ is the outward unit
normal to $\Real^d \setminus \overline{\mco}$.  
In addition, to guarantee uniqueness, we require that $v$ satisfy
a radiation condition at infinity: if 
$\mco \subset B(0;R)= \{ x\in \Real^d : |x|<R\}$, then 
$$\left(  \frac{\partial}{\partial |x|} v -i\lambda v \right)
 \restrict_{
\Real^d \setminus \overline{B}(0;R)}\in L^2(\Real^d \setminus \overline{B}(0;R))
.$$
It follows then that for large $|x|$, $v$ has the form
$$v(x; \omega, \lambda)= |x|^{-(d-1)/2}
e^{i\lambda |x|}\left( k(\omega, x/|x|, \lambda)+O(|x|^{-1})\right).$$
This function $k$ is called the {\em transmission coefficient}.

Now the scattering matrix $S(\lambda)$ is
given by $S(\lambda)=I+K(\lambda)$, where 
\begin{equation}\label{eq:K}
[K(\lambda) f](\omega)=
-\left( \frac{i\lambda}{2\pi}\right)^{(d-1)/2} 
\int_{\Sphere^{d-1}} k(\omega, -\theta;\lambda)f(\theta)d\theta.
\end{equation}

The proof of the following lemma uses separation of 
variables and explicit computations involving Bessel and Hankel functions.
Related calculations have been made in many places, including
\cite{beale,la-ph,chevenfxsgn,vasy}.
\begin{lemma}\label{l:spherecalc}
Let $\mco= B(0;R)$, and let $S(\lambda)$ denote 
the scattering matrix for $-\Delta$ on $\Real^d\setminus
\overline{\mco}$ with either Dirichlet or Neumann 
boundary conditions. Let $d$ be even 
and $m\in \Natural$.  Then there is a constant $c_0>0$, depending
on $R$ and $m$, so that
for $\sigma >0$, $|\det (mS(i\sigma)-(m-1)I)|\geq c_0 \exp(c_0 \sigma ^d).$
\end{lemma}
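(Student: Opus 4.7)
The plan is to exploit the rotational symmetry of $\mco = B(0;R)$ to diagonalize $S(\lambda)$ using spherical harmonics, and then bound each factor of the resulting product of eigenvalues with explicit Bessel-function identities at $\lambda = i\sigma$.

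Because $S(\lambda)$ commutes with the $O(d)$ action on $L^2(\Sphere^{d-1})$, it decomposes over the degree-$\ell$ spherical harmonic subspaces $\mch_\ell$, on each of which $S(\lambda)$ acts as a scalar $s_\ell(\lambda)$. The dimension $d_\ell$ of $\mch_\ell$ satisfies $\sum_{\ell \leq L} d_\ell \sim C_d L^{d-1}$ as $L\to\infty$. Separating variables and matching the spherical harmonic expansion of $e^{-i\lambda x\cdot \omega}$ (whose radial coefficients are $J_\nu(\lambda|x|)$, $\nu = \ell + (d-2)/2$) to the outgoing Hankel function on $\{|x|=R\}$ gives, in the Dirichlet case, the familiar formula
$$s_\ell(\lambda) = - \frac{H^{(2)}_\nu(\lambda R)}{H^{(1)}_\nu(\lambda R)},$$
with the analogous formula involving Hankel-function derivatives in the Neumann case. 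Hence
$$\det\bigl(m S(i\sigma) - (m-1) I\bigr) = \prod_{\ell\geq 0} \bigl(m\, s_\ell(i\sigma) - (m-1)\bigr)^{d_\ell}.$$

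Using the standard identities $H^{(1)}_\nu(iz) = -\tfrac{2i}{\pi} i^{-\nu} K_\nu(z)$ and $J_\nu(iz) = i^\nu I_\nu(z)$ together with $H^{(2)}_\nu = 2 J_\nu - H^{(1)}_\nu$, and noting that $\nu$ is a non-negative integer when $d$ is even (so $i^{2\nu} = (-1)^\nu$), a short computation yields
$$s_\ell(i\sigma) - 1 = (-1)^{\nu+1}\, i\pi\, \frac{I_\nu(\sigma R)}{K_\nu(\sigma R)}$$
in the Dirichlet case. In particular $s_\ell(i\sigma) - 1$ is purely imaginary, consistent with the general fact recalled at the start of this section. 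Writing $\alpha_\ell(\sigma) = \pi I_\nu(\sigma R)/K_\nu(\sigma R) > 0$, one gets $m\, s_\ell(i\sigma) - (m-1) = 1 \pm i m \alpha_\ell(\sigma)$, so
$$\bigl|m\, s_\ell(i\sigma) - (m-1)\bigr|^2 = 1 + m^2 \alpha_\ell(\sigma)^2 \geq \max\bigl(1,\, m^2 \alpha_\ell(\sigma)^2\bigr).$$
Thus every factor has modulus at least $1$, and in particular $|\det(mS(i\sigma) - (m-1)I)| \geq 1$ for every $\sigma > 0$. The Neumann case is entirely parallel, with $I_\nu, K_\nu$ replaced by $I'_\nu, K'_\nu$.

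To upgrade this to the claimed exponential lower bound for large $\sigma$, I invoke the uniform Debye asymptotics of $I_\nu$ and $K_\nu$: for any fixed $\kappa \in (0,1)$ there is $c = c(\kappa, R) > 0$ such that for every non-negative integer $\nu \leq \kappa\, \sigma R$ and all $\sigma$ sufficiently large,
$$\log\frac{I_\nu(\sigma R)}{K_\nu(\sigma R)} \geq c\,\sigma,$$
the ratio behaving like $\exp\bigl(2\sigma R\sqrt{1 - (\nu/\sigma R)^2}\bigr)$ modulo algebraic prefactors. In this range $m \alpha_\ell(\sigma) \geq 1$ eventually, so $\log |m s_\ell(i\sigma) - (m-1)| \geq \tfrac12 c\sigma$. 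Summing with the dimension count $\sum_{\ell : \nu \leq \kappa \sigma R} d_\ell \sim C'_d (\kappa R)^{d-1}\sigma^{d-1}$ yields
$$\log\bigl|\det(mS(i\sigma) - (m-1)I)\bigr| \geq c_0\, \sigma^d$$
for $\sigma$ large, where $c_0 > 0$ depends on $R$ and $m$. For small $\sigma$ the trivial bound $|\det| \geq 1$ already implies $|\det| \geq c_0 e^{c_0 \sigma^d}$ after shrinking $c_0$ if necessary. The main subtlety is uniform control of the Bessel ratio across the range of $\nu$; restricting to $\kappa$ strictly less than $1$ keeps us safely away from the delicate transition $\nu \approx \sigma R$ and reduces the estimate to classical Debye asymptotics, which also gives the corresponding bounds on $I'_\nu/K'_\nu$ needed for the Neumann case via the standard recurrence relations.
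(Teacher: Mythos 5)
Your proof is correct and uses essentially the same argument as the paper: spherical-harmonic diagonalization of $S(i\sigma)$, the observation that the eigenvalues of $S(i\sigma)-I$ are purely imaginary so every determinant factor has modulus at least $1$, and Debye (Olver) asymptotics of $I_\nu/K_\nu$ summed against the $\sim L^{d-1}$ dimension count to get $\exp(c_0\sigma^d)$. The one divergence is the range of angular momenta kept: the paper restricts to $\sigma R/M\leq l\leq\sigma R$ so that $\tau=\sigma R/\nu$ stays in a compact subset of $(0,\infty)$ and the uniform asymptotics apply verbatim, whereas you keep all $\nu\leq\kappa\sigma R$ and therefore need (and gloss over, beyond the acknowledged subtlety) a uniform statement down to bounded $\nu$ --- this is easily patched by noting that $I_\nu(x)/K_\nu(x)$ is decreasing in $\nu$ for fixed $x>0$, or simply by discarding the small-$\nu$ terms, whose factors are $\geq 1$, and working on an annulus in $\nu$ as the paper does.
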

\begin{proof}
Let $\{ Y^\mu_l \}$, $l=0,1,2,...$, $\mu=1,\;2,\;..., \mu(l)$ be a complete 
orthonormal
set of
spherical harmonics on $\Sphere^{d-1}$.
 Here $\mu(l)= \frac{2l+d-2}{d-2}\binom{l+d-3}{d-3}$  
and these eigenfunctions of the Laplacian $\Delta_{\Sphere^{d-1}}$
on $\Sphere^{d-1}$ satisfy
$$-\Delta _{\Sphere^{d-1}}Y^\mu  _l=l(l+d-2)Y^\mu_l,\;
 l=0,\; 1,\;2,...,\; \mu=1,\; 2,\;...,\; \mu(l).$$
For the Dirichlet Laplacian on $\Real^d \setminus \overline{B}(0;R)$,
the transmission coefficient $k_D(\lambda)$ is
$$k_D(\theta,\theta',\lambda)= 2\left( \frac{2\pi}{\lambda}\right)^{(d-1)/2}
\sum_{l=0}^{\infty} \sum _{\mu =1}^{\mu(l)} (-i)^l e^{-i(\nu \pi/2+ \pi/4)}
   \frac{J_\nu(\lambda R)}{H^{(1)}_\nu (\lambda R)} Y^\mu_l(\theta)\overline{Y}^\mu_l(\theta') $$
where $\nu =\nu(l)= l-1+d/2$, and $J_\nu$ is the Bessel function of order
$\nu$ of the first kind, and $H^{(1)}_\nu$ is a Hankel function.
We are interested in $k_D$ evaluated at $\lambda=i\sigma$.  Using 
\cite[9.6.3 and 9.6.4]{olver}
$$k_D(\theta,\theta',i\sigma)= \pi
\left( \frac{2\pi}{\sigma}\right)^{(d-1)/2}
\sum_{l=0}^{\infty} \sum _{\mu =1}^{\mu(l)} 
   \frac{I_\nu(\sigma R)}{K_\nu (\sigma R)} 
Y^\mu_l(\theta)\overline{Y}^\mu_l(\theta').$$
Now we note that since the eigenvalues of $k_D(i\sigma)$ are 
real, and the spherical harmonics are either even or odd in the reflection
$\omega \rightarrow -\omega$, the eigenvalues of $K_D(i\sigma)$ 
are pure imaginary, and the eigenvalues
of $(2\pi/\sigma)^{(d-1)/2}K_D(i\sigma)$ have the same norm as the eigenvalues
of $k_D(i\sigma)$.  
Hence
\begin{align*}
|\det (mS(i\sigma)-(m-1)I)|& 
= \prod_{l} \left( 1+ \left| \pi m \frac{I_\nu(\sigma R)}{K_\nu (\sigma R)}
\right|^2\right)^{\mu(l)/2}.\\
 & = 
 \exp \left[\sum_{l=0}^\infty
 \frac{\mu(l)}{2} \log   \left( 1+ \left| \pi m \frac{I_\nu(\sigma R)}
{K_\nu (\sigma R)}
\right|^2\right) \right] \\ & \geq 
\exp \left[ \sum_{\sigma R/M \leq l \leq \sigma R}
\frac{\mu(l)}{2} \log   \left( 1+ \left| \pi m \frac{I_\nu(\sigma R)}{K_\nu (\sigma R)}
\right|^2\right) \right]
\end{align*}
for $M>1$.
From the uniform asymptotic expansions of \cite[9.7.7,9.7.8]{olver},
we have that for $\tau>0$ in a fixed compact set
$$\frac{I_\nu(\nu \tau)}{K_\nu(\nu \tau)} = \frac{1}{\pi} e^{2 \nu \eta}
( 1+O(\nu^{-1}))$$
where $\eta = \sqrt{1+\tau^2}+ \log ( \tau/(1+\sqrt{1+\tau^2}))$.
By restricting $\sigma R/M \leq l \leq \sigma R$ for
some finite  $M>1$, thus ensuring $\tau= \sigma R/\nu$ lies in a compact
set away from $0$, we get from these
asymptotics that for sufficiently large $\sigma$
\begin{align*}|\det (mS(i\sigma)-(m-1)I)| & \geq 
  \exp \left( \sum_{\sigma R/M \leq l \leq \sigma R} 
\frac{\mu(l)}{2} l \right)  \\ 
& \geq c_0 \sigma^{d}.
\end{align*}
 The last inequality uses that 
$\mu(l) > c_0' l^{d-2}>0$ for sufficiently large $l$.

Likewise, for the Robin-type boundary conditions in the exterior of 
the sphere where the boundary function is $h_0/R$, for a 
constant $h_0\geq 0$
$$k_{h_0}(\theta,\theta',i\sigma)= \pi \left( \frac{2\pi}{\sigma}\right)^{(d-1)/2}
\sum_{l=0}^{\infty} \sum _{\mu =1}^{\mu(l)} 
\frac{ (h_0+\frac{d-2}{2})I_\nu(\sigma R)-\sigma R I_\nu ' (\sigma R)}{
(h_0+\frac{d-2}{2}) K_\nu(\sigma R)+\sigma R K'_\nu (\sigma R)}
Y^\mu_l(\theta)\overline{Y}^\mu_l(\theta').
$$
Thus, a similar computation as in the Dirichlet case, using 
\cite[9.7.7-9.9.10]{olver}, gives the result for the Neumann boundary
condition ($h_0=0$), or indeed for any Robin-type boundary condition
with $h_0 \geq 0$.
\end{proof}

We recall some results of \cite{beale, la-ph} which we shall use.
The first is \cite[Theorem 3.7]{beale} along with 
some results of \cite[Theorem 3.5]{beale}, which generalizes 
\cite[Theorem 2.4]{la-ph}.  In
the statement of the theorem we use $k(i\sigma)$ to denote the 
operator given by $[k(i\sigma)f](\omega)= \int_{\Sphere^{d-1}}k(\omega,\theta;
i\sigma) f(\theta)d\theta$.  This operator $k(i\sigma)$ is self-adjoint
for the boundary conditions which we consider.

\begin{thm}\cite[Theorem 3.7; see also Theorem 3.5]{beale}  
\label{thm:bealemono}
Let $\mco_1$ and $\mco_2$ be obstacles so that $\overline{\mco_1}
\subset \mco_2$.  Let $h_j$ be admissable boundary functions 
on $\partial \mco_j$, $j=1,2$, and let $k_j(\lambda)$ denote 
 the operators on $L^2(\Sphere^{d-1})$  with Schwartz 
kernels given by the 
 corresponding transmission coefficients for
the Robin boundary conditions.  
Then there is a $\sigma_0>0$
depending on $\mco_1$, $\mco_2$, and on $h_2$ so that $0>k_1(i \sigma) 
\geq k_2(i\sigma)$ for $\sigma>\sigma_0$.  For obstacles $\mco_1$, 
$\mco_2$, $\overline{\mco_1}\subset \mco_2$, with Dirichlet
boundary conditions, $0<k_1(i\sigma)\leq k_2(i\sigma)$ for 
all $\sigma>0$.
\end{thm}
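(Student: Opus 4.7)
The plan is to reduce the operator inequalities to a variational comparison on the exterior domains $\Real^d\setminus\overline{\mco_j}$. At the purely imaginary point $\lambda=i\sigma$ the operator $-\Delta+\sigma^2$ is strictly positive and outgoing solutions decay exponentially, so the scattering amplitude $k_j(i\sigma)$ should admit a quadratic-form description via an energy integral over the exterior domain. This is the setting in which the Lax--Phillips/Beale technique is naturally carried out.

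First, for $f\in L^2(\Sphere^{d-1})$ I would form the Herglotz-type wave
\[
\phi_f(x)=\int_{\Sphere^{d-1}} f(\omega)\,e^{\sigma x\cdot\omega}\,d\omega,
\]
which satisfies $(-\Delta+\sigma^2)\phi_f=0$ on all of $\Real^d$ (with exponential growth). Let $w_f^{(j)}$ be the unique solution of $(-\Delta+\sigma^2)w=0$ on $\Real^d\setminus\overline{\mco_j}$ that decays at infinity and whose boundary data on $\partial\mco_j$ matches that of $\phi_f$ in the sense appropriate to the boundary condition (Dirichlet trace, or the Robin combination $(h_j+\partial_n)$). Substituting the kernel formula (\ref{eq:K}), pairing with $\bar f$, and applying Green's identity on the annulus $\{R<|x|<R'\}\cap(\Real^d\setminus\overline{\mco_j})$---using the outgoing asymptotics to evaluate the flux at $|x|=R'$---one should arrive at an identity of the shape
\[
(k_j(i\sigma)f,\bar f)_{L^2(\Sphere^{d-1})}
=C_{\sigma,d}\Bigl[\int_{\Real^d\setminus\overline{\mco_j}}\!\bigl(|\nabla w_f^{(j)}|^2+\sigma^2|w_f^{(j)}|^2\bigr)\,dx+\mathcal{B}_j(f)\Bigr]
\]
with $C_{\sigma,d}>0$. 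Here $\mathcal{B}_j(f)$ vanishes in the Dirichlet case and equals a boundary integral over $\partial\mco_j$ involving $h_j|w_f^{(j)}|^2$ (with a specific sign) in the Robin case. This identity immediately yields $k_j(i\sigma)>0$ in the Dirichlet case for every $\sigma>0$, and forces $k_j(i\sigma)<0$ in the Robin case provided $\sigma$ is large enough that the bulk term dominates $\mathcal{B}_j$; this is where the $h_j$-dependent threshold $\sigma_0$ enters.

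Second, the monotonicity in the obstacle would follow by a direct test-function argument exploiting $\overline{\mco_1}\subset\mco_2$. Any admissible candidate for the $\mco_2$-energy extends, after being glued to $\phi_f$ on the shell $\mco_2\setminus\overline{\mco_1}$, to an admissible candidate for the $\mco_1$-energy with the correct trace on $\partial\mco_1$. Dirichlet's principle applied to the quadratic functional appearing on the right side of the energy identity then compares the two minima, and the sign pattern in the identity translates the resulting inequality into $k_1(i\sigma)\leq k_2(i\sigma)$ in the Dirichlet case (valid for all $\sigma>0$, since no competing boundary term is present) and into $k_1(i\sigma)\geq k_2(i\sigma)$ in the Robin case, once $\sigma>\sigma_0$ makes both sides negative.

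The main obstacle I anticipate is the derivation of the clean energy identity in the first step: the $L^2(\Sphere^{d-1})$ pairing of transmission coefficients is defined via an asymptotic normalization at infinity, and translating it into a finite energy over the exterior domain requires handling the non-decaying Herglotz function $\phi_f$, tracking the constants from the normalization $\bigl(\frac{i\lambda}{2\pi}\bigr)^{(d-1)/2}$ in (\ref{eq:K}), and verifying that the flux contribution from $|x|=R'$ converges as $R'\to\infty$ to exactly the spherical pairing $(k_j(i\sigma)f,\bar f)$. Once this identity and its correct sign convention at $\partial\mco_j$ are in hand, both the Dirichlet inequality for all $\sigma>0$ and the Robin inequality for $\sigma>\sigma_0(h_2)$ follow from the variational comparison sketched above.
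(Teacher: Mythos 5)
This theorem is \emph{not} proved in the paper: it is stated with the citation \cite[Theorem 3.7; see also Theorem 3.5]{beale} and the text around it makes clear the author is importing Beale's monotonicity results (which in turn extend \cite[Theorem~2.4]{la-ph}) rather than re-deriving them. So the comparison is really between your sketch and the Lax--Phillips/Beale argument. Your overall framework --- Herglotz waves at $\lambda=i\sigma$, an energy identity via Green's formula, and a Dirichlet-principle comparison exploiting $\overline{\mco_1}\subset\mco_2$ --- is the right one, and this is indeed how the positivity and monotonicity of $k(i\sigma)$ are obtained in that literature.

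There is, however, a concrete gap in the energy identity you write down, and the gap matters because the variational step as you describe it does not close without it. You claim
$$(k_j(i\sigma)f,\bar f)\ =\ C_{\sigma,d}\Bigl[\int_{\Real^d\setminus\overline{\mco_j}}\bigl(|\nabla w_f^{(j)}|^2+\sigma^2|w_f^{(j)}|^2\bigr)\,dx+\mathcal{B}_j(f)\Bigr],$$
with $\mathcal{B}_j\equiv 0$ in the Dirichlet case, i.e. $k_j$ is (up to a fixed positive constant) the exterior energy $E_j$ alone. If that were the full identity, the gluing argument you propose would only give
$$E_1\ \leq\ E_2\ +\ \int_{\mco_2\setminus\overline{\mco_1}}\bigl(|\nabla\phi_f|^2+\sigma^2|\phi_f|^2\bigr)\,dx,$$
since the glued test function $w_f^{(2)}\cup\phi_f$ carries the (strictly positive) shell energy in addition to $E_2$. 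That inequality does \emph{not} yield $E_1\leq E_2$, hence does not yield $k_1\leq k_2$. What is missing from your identity is the interior term: if $n$ is the outward normal to $\mco$, Green's formula applied on both sides of $\partial\mco_j$ produces
$$(k_j(i\sigma)f,\bar f)\ =\ c_{\sigma,d}\Bigl[\,\underbrace{\int_{\Real^d\setminus\overline{\mco_j}}\bigl(|\nabla w_f^{(j)}|^2+\sigma^2|w_f^{(j)}|^2\bigr)\,dx}_{E_j}\ +\ \underbrace{\int_{\mco_j}\bigl(|\nabla\phi_f|^2+\sigma^2|\phi_f|^2\bigr)\,dx}_{I_j}\Bigr]$$
(Dirichlet case; for Robin the corresponding $h_j$-weighted boundary terms also enter, and their indefinite sign is exactly what forces the threshold $\sigma_0$). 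The interior term $I_j$ is what makes the comparison work: the shell energy of $\phi_f$ is precisely $I_2-I_1$, so the gluing/Dirichlet-principle inequality $E_1\leq E_2+(I_2-I_1)$ rearranges to $E_1+I_1\leq E_2+I_2$, which is $k_1\leq k_2$. In other words, the quantity that is monotone in the obstacle is the \emph{total} quadratic form $E_j+I_j$, not the exterior energy by itself; your identity as written drops $I_j$ and the argument then fails to deliver the direction of the inequality. I would also flag that you should confirm the sign of the normalization constant coming from $(\tfrac{i\lambda}{2\pi})^{(d-1)/2}$ in (\ref{eq:K}) at $\lambda=i\sigma$ with $d$ even: the $i^{d-1}$ factor is what makes $S(i\sigma)-I$ have real spectrum in this parity, and it determines whether the constant multiplying $E_j+I_j$ is positive (as needed for $k_j>0$) rather than negative.
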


Let $\mcr:L^2(\Sphere^{d-1})\rightarrow L^2(\Sphere^{d-1})$ be defined
by $(\mcr f)(\theta)= f(-\theta)$.  The following proposition follows
immediately from Theorem \ref{thm:bealemono} and 
 \cite[Theorem 4.4]{la-ph}.
\begin{prop} \label{p:l-pmono}  Let $\mco_j$,
 $h_j$, and $k_j$ be as in the statement of Theorem \ref{thm:bealemono}.
Then the eigenenvalues of $(k_j\mcr)(i\sigma)$ are real for 
sufficiently large $\sigma>0$.  Order the eigenvalues
of $(k_j  \mcr) (i\sigma) $, taking account of multiplicities:
$$\nu_1^{(j)}(i\sigma) \geq \nu_2^{(j)}(i\sigma) \geq ... >0 > ...\geq \kappa_2^{(j)}(i\sigma)\geq 
\kappa_1^{(j)}(i\sigma),\; j=1,\; 2$$
Then there is a $\sigma_0\geq 0$ so that for $\sigma>\sigma_0$
and for each $n\in \Natural$, 
$$ \nu_n^{(1)}(i\sigma) \leq \nu_n^{(2)}(i\sigma)\; \text{and}\; 
\kappa_n^{(1)}(i\sigma) \geq \kappa_n^{(2)}(i\sigma).$$
\end{prop}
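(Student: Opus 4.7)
The plan is to combine the operator-ordering for $k_j(i\sigma)$ supplied by Theorem~\ref{thm:bealemono} with a variational (min-max) characterization of the eigenvalues of $(k_j\mcr)(i\sigma)$ of the sort developed in \cite[Theorem 4.4]{la-ph}. The starting point is to reduce to a self-adjoint spectral problem. In the Dirichlet case, $k_j(i\sigma)>0$ for all $\sigma>0$, so the substitution $u = k_j(i\sigma)^{1/2}\tilde u$ converts the eigenvalue equation $k_j\mcr u = \lambda u$ into
\begin{equation*}
  k_j(i\sigma)^{1/2}\mcr k_j(i\sigma)^{1/2}\tilde u = \lambda \tilde u.
\end{equation*}
The operator on the right is self-adjoint and compact (since $\mcr$ is a self-adjoint unitary involution and $k_j(i\sigma)$ is compact), so its spectrum is real and accumulates only at $0$. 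This immediately gives reality of the eigenvalues of $(k_j\mcr)(i\sigma)$. For the Robin case I would run the same argument with $-k_j(i\sigma)>0$, valid for $\sigma>\sigma_0$ by Theorem~\ref{thm:bealemono}, picking up an overall sign change.

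Next I would invoke the Courant--Fischer principle for the self-adjoint compact operator above and rewrite it in terms of $v = k_j(i\sigma)^{1/2}\tilde u$. This yields the max-min representation
\begin{equation*}
  \nu_n^{(j)}(i\sigma) = \max_{\dim W = n}\; \min_{\substack{v \in W \\ v \ne 0}} \frac{\langle \mcr v, v\rangle}{\langle k_j(i\sigma)^{-1}v, v\rangle}
\end{equation*}
for the positive eigenvalues in decreasing order, with the dual min-max formula for the negative eigenvalues $\kappa_n^{(j)}(i\sigma)$. In the Dirichlet case, Theorem~\ref{thm:bealemono} gives $0 < k_1(i\sigma) \le k_2(i\sigma)$, so $k_1(i\sigma)^{-1} \ge k_2(i\sigma)^{-1}$ as quadratic forms; inserting this into the Rayleigh quotient above on the cone where $\langle \mcr v, v\rangle > 0$ (the only part that contributes to the positive $\nu_n$) forces the $j=1$ quotient to be $\le$ the $j=2$ quotient, and taking the max-min through the inequality delivers $\nu_n^{(1)}(i\sigma) \le \nu_n^{(2)}(i\sigma)$. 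The analogous argument applied on the negative cone gives $\kappa_n^{(1)}(i\sigma) \ge \kappa_n^{(2)}(i\sigma)$. The Robin case runs identically after noting that Theorem~\ref{thm:bealemono} supplies $0 > k_1(i\sigma) \ge k_2(i\sigma)$ for $\sigma > \sigma_0$, which is compatible with the sign flip in the self-adjoint reduction above.

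I do not expect a serious obstacle, since both inputs are available off the shelf. The mildest technical point is that $k_j(i\sigma)$ has eigenvalues accumulating at $0$, so $k_j(i\sigma)^{-1}$ is unbounded and the quadratic form $\langle k_j(i\sigma)^{-1}v, v\rangle$ has to be interpreted on its natural form domain; this is harmless because the min-max is insensitive to the choice of core and because the variational machinery in \cite[Theorem 4.4]{la-ph} already takes this form-theoretic viewpoint.
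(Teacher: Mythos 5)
Your proposal is correct and takes essentially the same route as the paper, whose proof consists of citing Theorem 4.4 of Lax--Phillips together with Theorem~\ref{thm:bealemono}: that Lax--Phillips result is precisely the min--max eigenvalue comparison you re-derive through the self-adjoint reduction $k_j^{1/2}\mcr k_j^{1/2}$ and the Rayleigh quotient $\langle\mcr v,v\rangle/\langle k_j(i\sigma)^{-1}v,v\rangle$, applied with the operator ordering from Theorem~\ref{thm:bealemono} (and, as you note, range monotonicity of $k_j^{1/2}$ from $k_1\le k_2$ keeps the form domains nested in the right direction). The one point worth spelling out if you write this up carefully is that the pointwise comparison of the two Rayleigh quotients actually reverses on the cone $\langle\mcr v,v\rangle<0$, so one should observe, as your parenthetical already suggests, that any $n$-dimensional $W$ attaining $\nu_n^{(1)}>0$ necessarily has $\langle\mcr v,v\rangle>0$ for every nonzero $v\in W$, which is the only case where the comparison is applied.
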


We are now ready to prove the main result of this section.
\begin{prop}\label{prop:lowerbd} Let $d$ be 
even and let $\mco \subset \Real^d$ be an obstacle with
$\mco \not = \emptyset$.  
Let $S(\lambda)$ denote the scattering matrix of $-\Delta$ on $\Real^d \setminus
\mco$ with either Dirichlet or Robin type boundary conditions.  In the 
latter case assume the boundary function $h$ satisfies
 $h\in C^1(\partial \mco)$, $h\geq 0$.
Then for $m \in \Natural $
there is a constant $c>0$, depending on both $\mco $ and 
the boundary condition, so that for sufficiently large $\sigma >0$,
$|\det (m S(i\sigma)-(m-1)I)| \geq \exp (c |\sigma|^d)$.
\end{prop}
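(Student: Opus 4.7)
The approach is to reduce the general-obstacle lower bound to the ball case of Lemma \ref{l:spherecalc} via the monotonicity principle in Proposition \ref{p:l-pmono}.

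First I would rewrite the determinant so as to expose the eigenvalues of $(k(\lambda)\mcr)(i\sigma)$. Using (\ref{eq:K}) one has $K(\lambda) \defeq S(\lambda) - I = -c(\lambda)\, k(\lambda)\mcr$, where $c(\lambda) = (i\lambda/2\pi)^{(d-1)/2}$ and $(\mcr f)(\theta) = f(-\theta)$. At $\lambda = i\sigma$ we have $c(i\sigma) = (-\sigma/2\pi)^{(d-1)/2}$; for $d$ even, $(d-1)/2$ is a half-integer, so $c(i\sigma)$ is pure imaginary with $|c(i\sigma)| = (\sigma/2\pi)^{(d-1)/2}$. By Proposition \ref{p:l-pmono} the eigenvalues $\{\mu_j(\sigma)\}$ of $(k_\mco\mcr)(i\sigma)$ are real for $\sigma$ sufficiently large, so the eigenvalues of $K(i\sigma)$ are pure imaginary of magnitude $(\sigma/2\pi)^{(d-1)/2}|\mu_j(\sigma)|$, and
\begin{equation*}
|\det(mS(i\sigma) - (m-1)I)|^2 = \prod_j \Bigl(1 + m^2 (\sigma/2\pi)^{d-1} \mu_j(\sigma)^2 \Bigr).
\end{equation*}

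Next I would bring in the monotonicity. Choose a closed ball $\overline{B} \subset \mco$. Assign to $\partial B$ the Dirichlet boundary condition if $\mco$ carries Dirichlet, or the Neumann condition ($h \equiv 0$, which is admissible) if $\mco$ carries an admissible Robin condition. Write $\mu_j^{(B)}(\sigma)$ for the corresponding real eigenvalues of $(k_B\mcr)(i\sigma)$. Theorem \ref{thm:bealemono} applies to the pair $B \subset \mco$ in both cases, and then Proposition \ref{p:l-pmono} yields, for $\sigma$ sufficiently large,
\begin{equation*}
|\mu_j^{(B)}(\sigma)| \leq |\mu_j^{(\mco)}(\sigma)|
\end{equation*}
when the positive (respectively negative) eigenvalues are indexed in decreasing (respectively increasing) order. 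Since each factor in the product above is $\geq 1$ and the corresponding factors dominate term by term, it follows that
\begin{equation*}
|\det(mS_\mco(i\sigma) - (m-1)I)| \geq |\det(mS_B(i\sigma) - (m-1)I)|.
\end{equation*}
Finally, Lemma \ref{l:spherecalc} applied to $B$ with the matched boundary condition gives $|\det(mS_B(i\sigma) - (m-1)I)| \geq c_0 \exp(c_0\sigma^d)$ for some $c_0 > 0$, completing the proof.

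The main obstacle is the bookkeeping in passing from the term-by-term comparison of eigenvalues of $k\mcr$ to a product-level inequality: one has to check that both the positive and the negative parts of the spectrum of $k\mcr$ contribute in the correct direction (they do, since only $|\mu_j|^2$ enters the product), and that any ``extra'' eigenvalues present in the $\mco$-spectrum but not paired in the $B$-spectrum only augment the product. The selection of boundary conditions on the inscribed ball --- Dirichlet paired with Dirichlet, Robin paired with Neumann --- is the one place where the two cases of the proposition require separate treatment, and is dictated entirely by the hypotheses of Theorem \ref{thm:bealemono}.
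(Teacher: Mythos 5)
Your proposal is correct and follows essentially the same route as the paper's own proof: inscribe a ball $B\subset\mco$, match the boundary condition on $\partial B$ (Dirichlet with Dirichlet, Neumann with Robin) so that Theorem \ref{thm:bealemono} and Proposition \ref{p:l-pmono} give termwise domination of the eigenvalues of $k\mcr$ at $i\sigma$, observe that the prefactor $(i\lambda/2\pi)^{(d-1)/2}$ evaluated at $\lambda=i\sigma$ is pure imaginary when $d$ is even so each factor of the determinant has modulus $\geq 1$ and increases with $|\mu_j|$, and then invoke Lemma \ref{l:spherecalc} for the ball. The only cosmetic difference is that you compare $|\det|^2$ via the factored form $\prod_j(1+m^2(\sigma/2\pi)^{d-1}\mu_j^2)$, whereas the paper compares $|\det|$ directly by splitting the product into positive and negative eigenvalues $\nu_n,\kappa_n$; the two bookkeeping conventions are equivalent.
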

\begin{proof}
Our conditions on $\mco$ ensure that there is some nontrivial 
closed ball contained in $\mco$.  By translating if necessary, we may assume the ball is $\overline{B}(0;R)$ for some $R>0$.  

We shall apply Proposition \ref{p:l-pmono} with $\mco_1 =B(0;R)$ and
$\mco_2=\mco$.  We use the original boundary condition on 
$\Real^d \setminus \overline{\mco_2}=\Real^d \setminus \overline{\mco}$.   
If the original boundary condition is Dirichlet, we use the 
Dirichlet boundary condition on $\Real^d \setminus \overline{\mco_1}$; if the 
original boundary condition is Neumann or Robin type, we use the Neumann boundary
condition on $\Real^d \setminus \overline{\mco_1}$.  In each case we denote the corresponding
scattering matrices by $S_j$ and the transmission coefficient
and its corresponding operator 
on $L^2(\Sphere^{d-1})$ by $k_j$.

The eigenvalues of $S_j(i\sigma)-I$ are, by (\ref{eq:K}), given by $-i^{d-1} \left( \frac{\sigma}{2\pi}\right)^{(d-1)/2}$ times the eigenvalues of $k_j(i\sigma)\mcr$.  If we denote the 
eigenvalues of $k_j(i\sigma)\mcr$ by $\{ \nu_n^{(j)}(i\sigma)\}
\cup \{ \kappa_j^{(j)}(i \sigma)\}$ 
with the same ordering as in the statement of Proposition \ref{p:l-pmono}, we 
have, for $\sigma>0$ sufficiently large,
\begin{align*}
& |\det(mS(i\sigma)-(m-1) I)\\ & =  |\det(mS_2(i\sigma)-(m-1) I)|\\
& = \prod_{n=1}^\infty
\left| 1-m i^{d-1} \left( \frac{\sigma}{2\pi}\right)^{(d-1)/2}
\nu_n^{(2)}(i\sigma)\right| \; \prod_{j=1}^\infty
\left| 1-m i^{d-1} \left( \frac{\sigma}{2\pi}\right)^{(d-1)/2}
\kappa_j^{(2)}(i\sigma)\right| 
\end{align*}
Now from Proposition \ref{p:l-pmono}, for $\sigma>0$ large enough,
\begin{align*} & |\det(mS_2(i\sigma)-(m-1) I)| \\ &  \geq 
 \prod_{n=1}^\infty \left| 1-m i^{d-1} \left( \frac{\sigma}{2\pi}\right)^{(d-1)/2}
\nu_n^{(1)}(i\sigma)\right| \; \prod_{j=1}^\infty \left| 1-m i^{d-1} \left( \frac{\sigma}{2\pi}\right)^{(d-1)/2}
\kappa_j^{(1)}(i\sigma)\right|\\ & = |\det (mS_1(i\sigma)-(m-1)I)|.
\end{align*}
Since $\mco_1=B(0;R)$, Lemma \ref{l:spherecalc} finishes the proof.
\end{proof}

\section{Complex-analytic results} \label{s:ca}

In this section we denote by $U_+$ the upper half plane:
$U_+=\{ z\in \Complex: \Im z>0\}$.  Let $f:U_+\rightarrow \Complex
$ be an analytic function, not identically $0$.  Assume
in addition that $f$ is continuous on $\overline{U_+}\setminus\{0\}$
and bounded in compact sets of $U_+$.
   Then we define (e.g. \cite[page 5]{govorov} or \cite[Section 1.14]{levin})
the 
{\em order} of $f$ in $U_+$ to be 
$$\rho = \lim \sup _{r\rightarrow \infty} 
\frac{\log_+ \log _+ \sup_{z\in U_+,\; |z|\leq r}|f(z)|}{\log r}.$$
We shall be interested in the case where $\rho>1$ is finite. When 
$\infty>\rho>1$ 
 this definition of order is consistent with
the definition of order in an angle given by Govorov in \cite[Part I, 
Section 1, page 1]{govorov};
see \cite[Theorem 1.4]{govorov}.

In preparation for
 the next theorem, we introduce some notation.  For $q\in \Natural$,
let $$E_q(u)= (1-u) \exp \left( u+\frac{u^2}{2}+... + \frac{u^q}{q}
\right) $$
denote the canonical Weierstrass factor.  Set, for $a\in \Complex$, $a\not =0$,
$$D_q(u,a)= \frac{E_q(u/a)}{E_q(u/\overline{a})},$$
the canonical Nevanlinna factor.

We shall use an adaptation of a result of \cite{govorov} for a function $f$
analytic on $U_+$,
of finite order $\rho>1$, which, in addition, has a continuous extension to
$\overline{U_+}$.  We note that this last condition (the continuous extension
to $\overline{U_+}$) is not made in \cite[Theorem 3.3]{govorov}, but allows us
to simplify the statement of the result-- in particular, with this 
condition, the singular boundary function, denoted
by $\varphi$ in the statement of \cite[Theorem 3.3]{govorov}, is
identically constant.  

\begin{thm}(\cite[Theorem 3.3, adapted; see also Theorem 3.2]{govorov}
\label{thm:govorov1}
 Let $f$ be 
analytic in the half plane $U_+$ and continuous on $\overline{U_+}.$
Suppose $f$ is of finite order $\rho$, $\rho \geq 1$, in $U_+$.
Let $\{z_n\}= \{ r_ne^{i\theta_n}\}$, $0<\theta_n<\pi$ be the set of zeros of 
$f$ in $U_+$, and set $q=[\rho]$.  Then there are real constants
$a_0,\; a_1,\; ...,a_q$ so that 
\begin{multline} f(z) =
\exp\left( i(a_0+a_1z+...+ a_qz^q) +\frac{1}{\pi i} \int_{-1}^1 \frac{ \log |f(t)|}{t-z}dt\right) 
\prod _{|z_n|\leq 1} \frac{z-z_n}{z-\overline{z_n}} \\ \times
\prod _{|z_n|>1}  D_{q}(z,z_n) 
\times \exp\left( \frac{z^{q+1}}{\pi i}\int_{|t|\geq 1} \frac{\log |f(t)|}
{t^{q+1}(t-z)}dt \right).
\end{multline}
The integrals and products in this expression are absolutely convergent.
Moreover,
$$ \sum_{r_n\leq 1} r_n \sin \theta_n <\infty,\; \sum_{r_n>1} r_n ^{-\rho -\epsilon}
\sin \theta_n<\infty, \; \int_{-\infty}^\infty
\frac{|\log |f(t)||}{1+|t|^{1+\rho+\epsilon}}dt<\infty$$
for any $\epsilon >0$.
\end{thm}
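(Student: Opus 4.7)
The plan is to derive Theorem \ref{thm:govorov1} as a specialization of Govorov's general Theorem 3.3 in \cite{govorov}. Govorov's original result provides the same type of factorization for any function of finite order $\rho \geq 1$ analytic in $U_+$ (with only the boundary regularity needed to define the order), but carries one additional multiplicative piece, a singular boundary factor $\exp(i\varphi(z))$, where $\varphi$ encodes a singular real measure on $\Real$ appearing in the half-plane analog of the Nevanlinna inner–outer factorization. The whole content of the adaptation is to observe that when $f$ extends continuously to all of $\overline{U_+}$, this singular factor collapses to a constant and can be absorbed into the polynomial in the exponent.

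First I would apply Govorov's Theorem 3.3 to $f$ verbatim, obtaining a factorization of the form
\[
f(z) = \exp(iP_q(z)) \cdot B(z) \cdot \exp\!\left( \tfrac{1}{\pi i}\int \tfrac{\log|f(t)|}{t-z}\, dt \right) \cdot \exp(i\varphi(z)),
\]
where $P_q$ is a real polynomial of degree at most $q=[\rho]$, $B(z)$ is the product over the zeros $\{z_n\}$ (Blaschke-type for $|z_n|\le 1$, canonical Nevanlinna factors $D_q(z,z_n)$ for $|z_n|>1$), the middle integral is Govorov's outer-function piece against a suitably truncated kernel, and $\varphi$ is the singular term. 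The zero-convergence sums $\sum_{r_n\le 1}r_n\sin\theta_n < \infty$, $\sum_{r_n>1}r_n^{-\rho-\epsilon}\sin\theta_n<\infty$, and the logarithmic boundary integrability $\int |\log|f(t)||(1+|t|)^{-1-\rho-\epsilon}\,dt<\infty$ are already part of Govorov's conclusion.

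Second, I would argue $\varphi \equiv \mathrm{const}$. Continuity of $f$ on $\overline{U_+}$ means $|f(t)|$ is continuous on $\Real$, so $\log|f(t)|$ is locally integrable (at real zeros of $f$ it has only logarithmic singularities). The Blaschke part $B$ and the outer integral already reproduce the continuous boundary modulus $|f(t)|$ via the Poisson–Stieltjes representation of $\log|f|$ in $U_+$. By uniqueness of the half-plane Nevanlinna decomposition, any singular measure driving $\varphi$ would create mass incompatible with a continuous $|f|$ on the boundary, so the defining measure must be zero and $\varphi$ reduces to a real constant, which is absorbed into $a_0$. Finally, I would rewrite the single outer integral in Govorov's form by splitting the $t$-integration into $|t|\le 1$ and $|t|>1$ and replacing the kernel $\frac{1}{t-z}$ by $\frac{z^{q+1}}{t^{q+1}(t-z)}$ on the far piece; the discrepancy is a polynomial in $z$ of degree at most $q$ with real coefficients, which is absorbed into $a_1 z + \cdots + a_q z^q$.

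The main obstacle is the second step: rigorously translating the hypothesis "continuous on $\overline{U_+}$" into "the singular boundary measure of $\log|f|$ vanishes" within Govorov's framework. This is essentially a check inside his proof of Theorem 3.3 — one needs to track the Herglotz-type decomposition of the positive harmonic functions built from $\log^\pm|f|$ and verify that a nonzero singular component would be inconsistent with a genuine continuous boundary trace of $f$ — but the verification requires careful attention to the order of limits and to the normalization of the polynomial piece at infinity.
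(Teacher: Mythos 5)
Your overall route is the same as the paper's: invoke Govorov's Theorem~3.3 verbatim and then observe that the extra hypothesis of continuity of $f$ on $\overline{U_+}$ forces the singular boundary function $\varphi$ to be constant. Indeed the paper does no more than this — it records the citation and states the simplification as an observation, without any proof, so your proposal is not taking a different route.

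Where your proposal goes astray is in the justification you sketch for why $\varphi$ must be constant. You assert that a nonzero singular measure would ``create mass incompatible with a continuous $|f|$ on the boundary.'' This cannot be the reason. The factor $\exp(i\varphi(z))$ is the half-plane analogue of a singular inner function, and its nontangential boundary values are unimodular almost everywhere on $\Real$; it contributes nothing whatsoever to $|f(t)|$ on the real axis, and the Poisson--Stieltjes representation of $\log|f|$ that you invoke is entirely blind to it. That independence is precisely what makes the inner--outer decomposition a decomposition. The correct mechanism is of a different nature: a nontrivial singular inner factor fails to extend continuously to $\Real$ at the closed support of the associated singular measure (it has essential singular behavior there, with no limit as one approaches that set from $U_+$), and this is what conflicts with $f\in C(\overline{U_+})$. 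To turn that into a proof one must check that the Blaschke/canonical-product part and the outer piece extend continuously wherever $f$ does, so that continuity is transferred to the singular factor, and then use that a singular inner function which is continuous up to the boundary is a unimodular constant. Replace the modulus-based argument with a statement and verification of this fact; as written, the step does not go through.
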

We remark for those comparing \cite[Theorem 3.3]{govorov} there seems to be a small error-- there does not seem to 
be a reason that the constants $a_j$ cannot be negative.
We include the restriction $\rho>1$ here because it is for such 
$\rho$ that our definition of order in a half plane
coincides with that of \cite{govorov}.

\begin{prop} \label{p:upperbd} Let $f$ be a function analytic in $U_+$ and continuous on 
$\overline{U_+}$, and of order at 
most $\rho>1$ in $U_+$.  Let 
$$\tilde{n}_f(r) =
\# \{ a_j \in U_+  :\; f(a_j)=0,\; \text{counted with multiplicity}\}.$$
Suppose $\tilde{n}_f(r)=O(r^{\rho'})$ as $r\rightarrow 
\infty$  and $\log |f(\pm t)|= O(t^{\rho'})$ as $t\rightarrow \infty$ for 
some $\rho'<\rho$.  Suppose $[\rho]=q$ is  even.
Let $\sigma>0$.  Then for any $\epsilon >0$ there is a constant $C=C_{\epsilon}<\infty$ so that
$$\log |f(i\sigma)| \leq C (1+ \sigma ^{\max(\rho'+\epsilon, q-1)} ),\; \sigma >0.$$
\end{prop}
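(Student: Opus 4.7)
The plan is to apply Govorov's factorization (Theorem \ref{thm:govorov1}) to $f$, evaluate $\log|f|$ at $z = i\sigma$ term by term, and estimate each piece using the zero-density and boundary-growth hypotheses.  Writing the formula schematically as $\log|f(i\sigma)| = L_1 + L_2 + L_3 + L_4 + L_5$, the five summands correspond respectively to the polynomial $\exp(iP(z))$ with $\deg P \le q$ and real coefficients; the Cauchy-type integral over $[-1,1]$; the Blaschke-like product over zeros with $|z_n|\le 1$; the Nevanlinna product $\prod_{|z_n|>1} D_q(z,z_n)$; and the exponential of the tail integral over $|t|\ge 1$.

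Three of the terms are essentially immediate.  One has $L_1 = -\Im P(i\sigma) = -\sum_j a_j \sin(j\pi/2)\sigma^j$, in which only odd $j$ contribute; since $q$ is even the largest odd $j \le q$ is $q-1$, so $L_1 \le C\sigma^{q-1}$.  For $L_2$ the real part of the kernel is $\sigma/[\pi(t^2+\sigma^2)] \le 1/(\pi\sigma)$ on $[-1,1]$, and $\log|f|$ is integrable there by the conclusion of Theorem \ref{thm:govorov1}, so $|L_2| = O(\sigma^{-1})$.  Each Blaschke-type factor $(i\sigma-z_n)/(i\sigma-\overline{z_n})$ has modulus $\le 1$ when $\Im z_n > 0$, hence $L_3 \le 0$.

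The main work is in $L_4$ and $L_5$.  For the Nevanlinna product I would expand $\log D_q(z,a) = -2i\sum_{k\ge q+1}(z^k/k)\Im(a^{-k})$ when $|z|<|a|$ and use the finite-sum expression coming from the definition of $E_q$ when $|z|\ge|a|$.  Evaluating at $z=i\sigma$, only odd $k$ survive in the real part; because $q$ is even, the tail series starts at $k=q+1$, while the finite-sum correction runs over odd $k\le q-1$.  Combined with the Chebyshev identity $|\sin(k\theta_n)|\le k\sin\theta_n$ on $[0,\pi]$, this yields $|\log|D_q(i\sigma,z_n)|| \le C\sin\theta_n(\sigma/r_n)^{q+1}$ for $r_n \ge 2\sigma$, while for $1<r_n \le 2\sigma$ the Blaschke-like part of $D_q$ has modulus $\le 1$ and the exponential correction is bounded by $C\sin\theta_n(\sigma/r_n)^{q-1}$, giving $\log|D_q(i\sigma,z_n)| \le C\sin\theta_n(\sigma/r_n)^{q-1}$ in that range.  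A Stieltjes integration by parts against $\tilde n_f(r)=O(r^{\rho'})$ (using $\rho'<\rho<q+1$) then bounds the far sum by $O(\sigma^{\rho'})$ and the near sum by $O(\sigma^{\max(\rho'+\epsilon,q-1)})$, with $\epsilon$ absorbing the borderline case $\rho'=q-1$ in which a $\log\sigma$ factor appears.  For $L_5$ a direct real-part computation gives
\[
L_5 = \frac{(-1)^{q/2}\sigma^{q+1}}{\pi}\int_{|t|\ge 1}\frac{\log|f(t)|}{t^q(t^2+\sigma^2)}\,dt,
\]
and the hypothesis $|\log|f(\pm t)|| \le C|t|^{\rho'}$ together with a split of the integral at $|t|=\sigma$ gives $|L_5| = O(\sigma^{\max(\rho',q-1)})$.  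Summing the five bounds yields $\log|f(i\sigma)| \le C(1+\sigma^{\max(\rho'+\epsilon,q-1)})$ for $\sigma\ge 1$, and the range $\sigma\in(0,1]$ is handled by continuity of $f$ on $\overline{U_+}$.

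The main obstacle is the Nevanlinna product $L_4$: evenness of $q$ is essential, since otherwise an odd $k=q$ would survive in the finite-sum correction (and in $L_1$) and produce an unwanted $\sigma^q$ contribution exceeding the claimed bound.  The splitting of the zero sum at $r_n=2\sigma$ and the use of the full zero-counting hypothesis $\tilde n_f(r)=O(r^{\rho'})$ (rather than only the weaker $\sum r_n^{-\rho-\epsilon}\sin\theta_n<\infty$ supplied by Theorem \ref{thm:govorov1}) are what produce the sub-$\sigma^\rho$ bound.
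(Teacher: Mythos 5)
Your proposal is correct and follows the same overall strategy as the paper: apply Govorov's factorization (Theorem~\ref{thm:govorov1}), observe that evenness of $q$ kills the top-order $\sigma^q$ contribution to $\Re$ of the polynomial and exponential factors, use the zero-count hypothesis against the Nevanlinna product, and use the boundary growth hypothesis against the tail integral. The one genuine difference is in the treatment of $\prod_{|z_n|>1}D_q(i\sigma,z_n)$. The paper reduces $D_q$ to $D_{q-1}$ by factoring out $\exp\bigl(-i^{q+1}\sigma^q\sum 2\Im z_n^q/|z_n|^{2q}\bigr)$, notes this factor is pure imaginary because $q$ is even, and then simply cites the canonical-product bounds from Govorov's Lemma~3.4, splitting the argument on $\rho'<q$ versus $\rho'\geq q$. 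You instead expand $\log D_q(i\sigma,z_n)$ directly, observe that only odd powers $k$ contribute to the real part (so the finite-sum correction stops at $k=q-1$ and the tail series begins at $k=q+1$), bound $|\sin(k\theta_n)|\leq k\sin\theta_n$, and split the sum at $r_n=2\sigma$ before integrating by parts against $\tilde n_f$. Your route is more elementary and self-contained where the paper's is shorter but leans on Govorov's estimates; both invoke the parity of $q$ at exactly the same point, and both arrive at the bound $\exp\bigl(C\sigma^{\max(\rho'+\epsilon,\,q-1)}\bigr)$. Your explicit remark that $L_3\leq 0$ and the $O(\sigma^{-1})$ bound on the $[-1,1]$ integral are left implicit in the paper, and your closing comment about $\sigma\in(0,1]$ via continuity is a harmless addition.
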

We comment that the restriction that $[\rho]=q$ be even is necessary. 
For  odd $q$,
we may consider as a counterexample the function $\exp(\pm i z^q)$, where the choice of 
sign is determined by the parity of $(q-1)/2$.
\begin{proof}
In this proof $C$ denotes a positive constant which may depend upon $\epsilon$
and may change from line to line.

We use the expression for $f$ from Theorem \ref{thm:govorov1}, along with the notation of that theorem.  In 
particular, $\{z_n\}$ denotes the set of zeros of $f$ in $U_+$, repeated according to their multiplicity.
From Theorem \ref{thm:govorov1}, we 
can write 
$$f(z) = \exp(g_1(z)+ g_2(z)) \prod _{|z_n|\leq 1}\frac{z-z_n}{z-\overline z_n}
 \prod _{|z_n|>1 } D_q(z,z_n)$$
where 
$$g_1(z) = i(a_0+a_1z+...+a_q z^q)+ \frac{1}{\pi i}\int_{-1}^1 \frac {\log |f(t)|}{
t-z} dt$$
and $$g_2(z) = \frac{z^{q+1}}{\pi i}\int_{|t|\geq 1} \frac{\log |f(t)|}
{t^{q+1}(t-z)}dt.$$

Recalling that $a_j \in \Real$ and $q$ is even, we see 
$$ \Re g_1(i\sigma) =O(\sigma^{q-1})\; \text{as} \; \sigma \rightarrow 
\infty$$  
so that $|\exp(g_1(i\sigma))|\leq C\exp(C\sigma^{q-1})$.  Moreover, 
\begin{align}
\Re (g_2(i\sigma)) & = \frac{i^q\sigma^{q+1}}{2\pi}
\int_{|t|\geq 1} \frac{ t \log|f(t)|}{t^{q+1}(t^2+\sigma^2)} dt .
\end{align}
Thus, if $\rho'<q-1$, we see immediately that $\Re g_2(i\sigma)=O(\sigma^{q-1})$
since in this case $t^{-q}\log|f(t)|$ is integrable on 
$\{ t\in \Real:\; |t|\geq 1\}$.
On the other hand, if $\rho'\geq q-1$, then for $\epsilon>0$ sufficiently
small we can write 
\begin{align*}
|\Re (g_2(i\sigma)) | 
& \leq  \frac{\sigma^{q+1}}{2\pi}
\int_{|t|\geq 1} \frac{ |\log|f(t)||}{ t^{q+1+\rho'-q +\epsilon }\sigma^{1-(\rho'-q+\epsilon )}} dt\\
 & \leq C \sigma^{\rho'+\epsilon}.
\end{align*}

Consider 
$$\prod_{|z_n|> 1} D_q(i\sigma, z_n).$$  We divide this into 
two cases, depending on the relative size of $\rho'$ and $q$.
We note that {\em if} $\rho'<q$, then 
$$\sum_{|z_n|>1}\left( \frac{ 1}{z_n^q}-  \frac{1}{\overline{z_n}^q}\right) =
\sum_{|z_n|>1}\left( \frac{ -2i \Im z_n^q }{|z_n|^{2q}}\right)
$$
and that the assumption that $q>\rho'$ implies that the sum converges.  
Hence, if $\rho'<q$, we have 
$$\prod_{|z_n|> 1} D_q(i\sigma, z_n)
= \left( \prod_{z_n > 1} D_{q-1}(i\sigma, z_n)\right) \exp \left(-i^{q+1}
 \sigma^q
\sum_{|z_n|> 1}\left( \frac{ 2 \Im z_n^q }{|z_n|^{2q}}\right)\right). $$
Since $q$ is even, the exponent in the second factor is pure imaginary.
Now when $\rho'<q$, the estimates used in the proof of 
\cite[Lemma 3.4]{govorov} (and related to fairly standard estimates of 
canonical products; compare \cite[Section I.4]{levin}, for example) 
show that for $\rho'<q$,
$$\left| \prod_{z_n > 1} D_{q-1}(i\sigma, z_n)\right| 
\leq C \exp(C|\sigma|^{\max(q-1,\rho'+\epsilon)}),\;  \epsilon >0.$$

On the other hand, if $\rho'\geq q$, a direct application
of the estimates as in the
proof of \cite[Lemma 3.4]{govorov} shows that
$$\left| \prod_{z_n > 1}D_{q}(i\sigma, z_n)\right| 
\leq C \exp(C|\sigma|^{\rho'+\epsilon}),\;  \epsilon >0.$$
In either case, we have
$$ \left| \prod_{z_n > 1}D_{q}(i\sigma, z_n)\right| 
\leq C \exp(C|\sigma|^{\max(\rho'+\epsilon,q-1)}),\;  \epsilon >0.$$

\end{proof}

\section{Proof of Theorem \ref{thm:main}}\label{s:proofofthm}


In this section we give the proof of Theorem \ref{thm:main},.
Suppose $m\in \Natural$.   We 
will apply Proposition \ref{p:upperbd} to 
$f_m(\lambda)= \det( mS(\lambda)-(m-1)I)$.

We first show that $f_m$ has the regularity properties of Proposition 
\ref{p:upperbd}.  Our assumptions on the boundary conditions on
$\partial (\Real^d \setminus \overline{\mco}) $ ensure that
the resolvent $R(\lambda)=(P-\lambda^2)^{-1}$ is holomorphic in
the closure of $\Lambda_0\subset \Lambda$.  Thus $S(\lambda)$ is 
holomorphic in that region as well. 
Morover, $S(\lambda)$ is continuous at $0\in \{ z\in 
\Complex: \Im z\geq 0\}$, see Proposition \ref{p:contat0}.  
Then $f_m$ inherits these regularity 
properties of the scattering matrix.



The proof of the theorem is by contradiction.  Suppose there
is some combination of nontrivial obstacle $\mco$ and boundary condition
(Dirichlet or admissable Robin) so that
$\lim \sup _{r\rightarrow \infty} \frac{ \log n_m(r)}{\log r} = \rho' <d$. 
Let $\tilde{n}_{f_m}(r)$ be the the number of 
 zeros of $f_m=\det(mS(\lambda)-(m-1)I)$ 
in the upper half plane with norm at most $r$.
By Proposition \ref{p:chi-hi4},
$\lim \sup_{r\rightarrow \infty}\frac{ \log n_m(r)}{\log r}
= \lim\sup _{r\rightarrow \infty}\frac{ \log \tilde{n}_{f_m}(r)}{\log r}.$
 It follows from the 
same arguments as in, for example,
 \cite[Section 2]{zw-pfeven} or \cite[Theorem 7]{zwpf}, 
that the order
of $f_m(\lambda)=\det(mS(\lambda)-(m-1)I)$ on the upper half plane is at most $d$.  Then,
using Propositions \ref{p:summod2pi} and \ref{p:upperbd}, we must have 
$|\det(mS(i\sigma)-(m-1)I)|\leq C \exp(\max (\rho'+\epsilon, d-1))$ for all $\epsilon >0$
with some constant $C=C_\epsilon$.  But this contradicts Proposition \ref{prop:lowerbd},
proving the theorem for $m>0$.

If $m<0$, we observe that $n_{-m}(r)=n_m(r)$, as follows, for example, from 
\cite[(2.2)]{ch-hi4}:
$$S(|\lambda|e^{-i\pi \arg \lambda})^*= 2I-\mathcal{R} S(e^{i\pi} \lambda) \mathcal{R},\; \lambda \in \Lambda$$
where $(\mathcal{R}f)(\theta)=f(-\theta).$

\vspace{10mm}

\section{The scattering matrix at $0$ in even dimensions}\label{s:smat0}

The results of this section, while used in the proof of Theorem \ref{thm:main},
use rather different techniques than the majority of this paper.  Hence we
include them here so as to not interupt the flow.  We note that both
Proposition \ref{p:contat0} and Corollary \ref{cor:lat0}
 may be well known, but we are unaware 
of a reference in which it is proved in this setting.  
In Section \ref{s:proofofthm} we used the fact that the scattering matrix 
has continuous extension to $\{ z\in \Complex: \Im z\geq 0\}$.  In 
this section we prove this.  With the assumptions
we have made on the operator $P$, 
the only real issue is the behavior of the scattering 
matrix at $0$.  Note that  the nature of the 
singularity of the ``model resolvent'' $(-\Delta -\lambda^2)^{-1}$
at $\lambda=0$ depends on the dimension
 and that the expression for the scattering matrix
(for example, \cite[Proposition 2.1]{pe-zwsc}, recalled
here in Proposition \ref{p:pe-zw}) has 
dimensional-dependent powers of $\lambda$.  Thus one expects the 
scattering matrix to be ``more regular'' at $0$ in higher dimensions;
compare, for example, the papers \cite{b-g-d,jensenspec,jensenspec4,j-k}
which include much more detailed results on the behavior of the 
resolvent and scattering
matrix at $0$ for the Schr\"odinger
operator.
  However, as our proofs do not depend on the dimension
other than through its parity, we give them here for all even dimensions.

The proofs we include here do not require that $S$ be the scattering 
matrix for the Laplacian with Dirichlet or (admissable) Robin-type boundary
conditions in the exterior of an obstacle.  In fact, the proofs work for
the scattering matrix for any self-adjoint operator $P$ which is 
a compactly-supported ``black-box'' perturbation of the Laplacian 
satisfying the conditions of Sj\"ostrand-Zworski, see \cite{sj-zw}.  We 
recall these assumptions for the reader's convenience.

In recalling the assumptions of \cite{sj-zw} we use similar notation.
  By a black box operator we 
mean an operator $P$ defined on a domain ${\mathcal D}\subset \mch$ 
satisfying the conditions below.   
Let $R_0>0$ be fixed, and let $B(R_0)=\{x \in \Real^d: |x| <R_0\}$.
Let $\mch$ be a complex Hilbert space with orthogonal decomposition
$$\mch=\mch_{B(R_0)}\oplus L^2(\Real^d \setminus B(R_0)).$$
Using the notation of \cite{sj-zw}, we denote the corresponding orthogonal
projections by $u\mapsto u_{\restrict B(R_0)} $ and
$u \mapsto u_{ \restrict \Real^d  \setminus B(R_0)} .$  We assume that the 
operator $P:\mch\rightarrow \mch$ is semibounded below
and is self-adjoint with
domain  ${\mathcal D}\subset \mch$.  Furthermore, if  $u\in H^2(
\Real^d \setminus B(R_0))$ and 
$u$ vanishes near $B(R_0)$,  then $u\in {\mathcal D}$; and conversely
${\mathcal D}_{\restrict \Real^d \setminus B(R_0)}\subset H^2(\Real^d \setminus
B(R_0))$.
The operator $P$ is  $-\Delta $ outside $B(R_0)$: 
$$Pu_{\restrict  \Real^d \setminus B(R_0)} = -\Delta
u_{\restrict \Real^d  \setminus B(R_0)}
\; \mbox{for all }u\in { \mathcal D}$$
and $${\bf 1}_{B(R_0)}(P+i)^{-1} \; \mbox {is compact}$$
where ${\bf 1}_{B(R_0)}$ is the characteristic
function of $B(R_0)$.

We note that the Laplacian on $\Real^d \setminus \overline{\mco}$ with the boundary conditions
we have considered in the main part of the paper satisfy the black box conditions, with $\mch$ identified with $L^2(\Real^d \setminus \overline{\mco})$.

The proof of Proposition \ref{p:contat0}  will use \cite[Proposition 2.1]{pe-zwsc} which we
recall here for the convenience of the reader.  We have adapted the 
notation somewhat.
\begin{prop}(\cite[Proposition 2.1]{pe-zwsc})\label{p:pe-zw}   For $\phi\in C_c^{\infty}(\Real^d)$,
 let us denote by 
$${\mathbb E}^{\phi}_{\pm}(\lambda) :L^2(\Real^d)\rightarrow L^2(\Sphere^{d-1})$$
the operator with the kernel $\phi(x) \exp(\pm i \lambda \langle x,\omega \rangle )$.  Let us 
choose $\chi_i\in C_c^{\infty}(\Real^d)$, $i=1,\; 2,\;3$, such that $\chi_i\equiv 1$ near $U$
and $\chi_{i+1}\equiv 1$ on $\supp \chi_i$.  

Then for $0<\arg \lambda <\pi$ we have $S(\lambda)= I+A(\lambda)$, where 
$$A(\lambda) = i \pi (2\pi)^{-d} \lambda^{(d-1)/2} 
{\mathbb E}^{\chi _3}_+ (\lambda) [\Delta, \chi_1]
R(\lambda) [\Delta, \chi_2] ^t{\mathbb E}^{\chi_3}_-(\lambda)$$
where $^t{\mathbb E}$ denotes the transpose of ${\mathbb E}$.  The identity
holds for $\lambda \in \Lambda$ by analytic continuation.
\end{prop}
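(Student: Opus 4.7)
The plan is to derive the identity in the physical sheet $0 < \arg \lambda < \pi$, where $R(\lambda)$ is a bounded operator on $\mch$, by constructing a distorted plane wave for $P$ via the resolvent and reading off the scattering amplitude from its asymptotics at infinity. The extension to $\lambda\in \Lambda$ will then follow by analytic continuation of both sides.

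First, for each $\omega\in \Sphere^{d-1}$, I would build a generalized eigenfunction $\Phi(\cdot;\omega,\lambda)$ of $P$ with leading behavior $e^{-i\lambda x\cdot\omega}$ at infinity. Since $P=-\Delta$ on the support of $1-\chi_1$, the function $(1-\chi_1)e^{-i\lambda x\cdot\omega}$ satisfies
$$(P-\lambda^2)\bigl((1-\chi_1)e^{-i\lambda x\cdot\omega}\bigr)=[\Delta,\chi_1]e^{-i\lambda x\cdot\omega},$$
which is compactly supported inside $\{\chi_2\equiv 1\}$. Correcting by the resolvent gives the generalized eigenfunction
$$\Phi(x;\omega,\lambda) = (1-\chi_1)e^{-i\lambda x\cdot\omega} + R(\lambda)[\Delta,\chi_1]e^{-i\lambda x\cdot\omega},$$
which automatically satisfies the outgoing radiation condition because $R(\lambda)$ is the outgoing resolvent.

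Second, I would extract the far-field expansion. The term $[\Delta,\chi_1]e^{-i\lambda x\cdot\omega}$ is supported in $\{\chi_2\equiv 1\}$, so one may insert $\chi_2$ freely; in fact, for $x$ outside $\supp \chi_3$, applying $[\Delta,\chi_2]$ again allows one to pass from $R(\lambda)$ to the free outgoing resolvent $R_0(\lambda)$ at the cost of a localized commutator, using the cascade $\chi_{i+1}\equiv 1$ on $\supp\chi_i$. The standard stationary-phase expansion
$$R_0(\lambda)(x,y) = c_d\,\lambda^{(d-3)/2}\,\frac{e^{i\lambda|x|}}{|x|^{(d-1)/2}}\bigl(e^{-i\lambda\langle x/|x|,y\rangle}+O(|x|^{-1})\bigr)$$
as $|x|\to \infty$ with $y$ in a compact set then produces an asymptotic $\Phi\sim e^{-i\lambda x\cdot\omega} + |x|^{-(d-1)/2}e^{i\lambda|x|}\,a(\omega,x/|x|,\lambda)$, in which the scattering amplitude $a(\omega,\theta,\lambda)$ is a double oscillatory integral whose operator form on $L^2(\Sphere^{d-1})$ is precisely the sandwich $\mathbb{E}^{\chi_3}_+(\lambda)[\Delta,\chi_1]R(\lambda)[\Delta,\chi_2]\,{}^t\mathbb{E}^{\chi_3}_-(\lambda)$.

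Third, I would identify $A(\lambda) = S(\lambda) - I$ using the normalization in \eqref{eq:K}: the constant $c_d$ from the stationary-phase expansion, combined with the prefactor $-(i\lambda/(2\pi))^{(d-1)/2}$ appearing in the definition of $K(\lambda)$, collapses to $i\pi(2\pi)^{-d}\lambda^{(d-1)/2}$. The analytic continuation to $\lambda\in \Lambda$ is then immediate: the middle factor $\chi R(\lambda)\chi$ extends meromorphically to $\Lambda$ under the Sj\"ostrand--Zworski black-box framework, and $\mathbb{E}^{\chi_3}_\pm(\lambda)$ together with $\lambda^{(d-1)/2}$ are holomorphic on $\Lambda$. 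The main obstacle is the bookkeeping of constants in the stationary-phase step---matching powers of $i$, $\pi$ and $\lambda$ from the expansion of $R_0(\lambda)(x,y)$ against those in \eqref{eq:K}, and properly interpreting the transpose on $\mathbb{E}^{\chi_3}_-$ as encoding that the input variable $\omega$ is integrated against $f$ on the sphere while $\theta$ labels the output direction. The remaining steps (construction of $\Phi$, use of the redundant cutoffs, and analytic continuation) are routine.
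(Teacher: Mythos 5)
The paper does not give a proof of this proposition --- it is reproduced verbatim from Petkov and Zworski \cite{pe-zwsc}, and Christiansen explicitly recalls it ``for the convenience of the reader'' before using it in the proof of Proposition \ref{p:contat0}. So there is no internal proof to compare against; what you have sketched is a reconstruction of the Petkov--Zworski argument, and the overall strategy (distorted plane wave built from the resolvent, passage to the free resolvent outside a large cutoff, stationary phase for the far field, comparison with the normalization of $S(\lambda)-I$) is indeed the standard route to such representation formulas in black-box scattering.

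However, there is a sign error at the first step that would propagate through the rest. Since $(P-\lambda^2)\left((1-\chi_1)e^{-i\lambda x\cdot\omega}\right) = [\Delta,\chi_1]e^{-i\lambda x\cdot\omega}$, the function you wrote, $\Phi = (1-\chi_1)e^{-i\lambda x\cdot\omega} + R(\lambda)[\Delta,\chi_1]e^{-i\lambda x\cdot\omega}$, satisfies $(P-\lambda^2)\Phi = 2[\Delta,\chi_1]e^{-i\lambda x\cdot\omega}$, so it is not a generalized eigenfunction. The correction must be subtracted: $\Phi = (1-\chi_1)e^{-i\lambda x\cdot\omega} - R(\lambda)[\Delta,\chi_1]e^{-i\lambda x\cdot\omega}$. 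With that fixed, your subsequent steps --- using $(1-\chi_2)R(\lambda)f = R_0(\lambda)[\Delta,\chi_2]R(\lambda)f$ for $f$ supported in $\{\chi_2\equiv 1\}$ to pass to the free resolvent, extracting the far-field amplitude by stationary phase, and continuing both sides to $\Lambda$ via meromorphy of the cut-off resolvent together with entireness of $\mathbb{E}^{\chi_3}_\pm(\lambda)$ and holomorphy of $\lambda^{(d-1)/2}$ on $\Lambda$ --- are the right ones. One more point to watch: in your derivation it is $[\Delta,\chi_2]$ (from the free-resolvent passage) that sits to the \emph{left} of $R(\lambda)$ and $[\Delta,\chi_1]$ (from differentiating the incoming plane wave) to its \emph{right}, whereas the quoted formula has them in the opposite order; you would need to reconcile this, e.g.\ by relabelling the cutoffs or by using the symmetry of the scattering kernel. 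Finally, calibrating the normalization against \eqref{eq:K} is circular in spirit since that display is itself specific to the obstacle case; in the black-box setting the constant is pinned down by the $|x|\to\infty$ asymptotics of the free resolvent kernel.
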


\begin{prop}\label{p:contat0}
Let the dimension $d$ be even, let $P$ be a black box compactly supported
perturbation of the Laplacian, and let $S(\lambda)$ be the corresponding 
scattering matrix, unitary on the positive real axis.  Then there is an 
$\epsilon >0$ so that $S(\lambda)$ is analytic in $V_{\epsilon}\defeq \{ 0\leq \arg \lambda 
\leq \pi,\; 0<|\lambda|<\epsilon\}$, and
$$\lim_{|\lambda|\rightarrow 0, \; \lambda \in V_{\epsilon}}S(\lambda)$$
exists.
\end{prop}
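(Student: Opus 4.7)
The plan is to apply Proposition \ref{p:pe-zw} to write $S(\lambda) = I + A(\lambda)$ with
$$A(\lambda) = i\pi(2\pi)^{-d}\, \lambda^{(d-1)/2}\, {\mathbb E}^{\chi_3}_+(\lambda)\,[\Delta, \chi_1]\, R(\lambda)\, [\Delta, \chi_2]\,{}^t{\mathbb E}^{\chi_3}_-(\lambda),$$
and then to establish two things: $A(\lambda)$ extends analytically to $V_\epsilon$ for some small $\epsilon>0$, and $\|A(\lambda)\|\to 0$ as $|\lambda|\to 0$ with $\lambda\in V_\epsilon$. Together these give the analyticity claim and imply $\lim_{\lambda\to 0} S(\lambda) = I$.

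For analyticity, the Schwartz kernels $\chi_3(x)e^{\pm i\lambda\langle x,\omega\rangle}$ defining $\mathbb{E}^{\chi_3}_\pm(\lambda)$ are entire in $\lambda$ with locally uniformly bounded operator norms, and the commutators $[\Delta,\chi_1]$, $[\Delta,\chi_2]$ are $\lambda$-independent with compactly supported kernels. Thus we are reduced to studying $\tilde\chi R(\lambda)\tilde\chi$ for a compactly supported cutoff $\tilde\chi$. By the meromorphy on $\Lambda$ of the cutoff resolvent for black-box operators \cite{sj-zw}, its poles are locally finite, so we may choose $\epsilon>0$ small enough that $\tilde\chi R(\lambda)\tilde\chi$ is analytic on $V_\epsilon$, and hence so is $A(\lambda)$.

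For the limit, I analyze $\tilde\chi R(\lambda)\tilde\chi$ at $\lambda = 0$ in even dimension $d$. The free-resolvent kernel $G_0(x,y,\lambda)$ is a constant multiple of $(\lambda/|x-y|)^{(d-2)/2} H^{(1)}_{(d-2)/2}(\lambda|x-y|)$; the small-argument expansion of the Hankel function then yields, in a neighborhood of $0 \in \Complex$,
$$\tilde\chi R_0(\lambda)\tilde\chi = A_0(\lambda) + (\log\lambda)\, B_0(\lambda),$$
with $A_0$ and $B_0$ operator-valued and analytic. To pass from the free resolvent to the full cutoff resolvent, I invoke a Sjöstrand–Zworski style parametrix gluing $P$ on $B(R_0)$ with the free exterior resolvent, obtaining an analogous decomposition for $\tilde\chi R(\lambda)\tilde\chi$ (after shrinking $\epsilon$ if needed, so as to avoid any zero-energy eigenvalue or resonance of $P$). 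Multiplying by the prefactor $\lambda^{(d-1)/2}$ and using $(d-1)/2 \ge 1/2$ for every even $d\geq 2$, both $\lambda^{(d-1)/2}$ and $\lambda^{(d-1)/2}\log\lambda$ tend to $0$ as $\lambda\to 0$. Hence $A(\lambda)\to 0$ in operator norm on $L^2(\Sphere^{d-1})$, and the desired limit exists and equals $I$.

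The main obstacle is the parametrix step. Because $P$ is a compactly supported black-box perturbation rather than a potential perturbation, the usual resolvent identity is not available; one must instead carry out the gluing construction of \cite{sj-zw} to transfer the low-energy asymptotics of the explicit Hankel-function kernel of $R_0(\lambda)$ to those of $\tilde\chi R(\lambda)\tilde\chi$. Once that reduction is in place, the cancellation between the at-worst-logarithmic singularity of the sandwiched resolvent and the half-integer power $\lambda^{(d-1)/2}$ built into the scattering-matrix formula closes the argument.
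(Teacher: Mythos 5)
Your analyticity step matches the paper: reduce to the cutoff resolvent via Proposition~\ref{p:pe-zw}, then invoke meromorphy on $\Lambda$ from \cite{sj-zw} and local finiteness of the poles to get analyticity on $V_\epsilon$ for small $\epsilon$. That part is fine.

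The limit step, however, has a genuine gap. You assert that $\tilde\chi R(\lambda)\tilde\chi$ has at worst a logarithmic singularity at $\lambda=0$, transferred from the Hankel-function asymptotics of the free resolvent via a Sj\"ostrand--Zworski parametrix, and that the prefactor $\lambda^{(d-1)/2}$ then forces $A(\lambda)\to 0$. This reasoning is valid for the \emph{free} cutoff resolvent, but it is not true in general for the \emph{full} cutoff resolvent of a black-box operator: zero-energy eigenvalues and zero-energy resonances of $P$ produce genuine negative powers of $\lambda$ (and positive powers of $\log\lambda$) in the expansion. The Hahn-meromorphic expansion of \cite{mu-st} used by the paper, $\chi R(\lambda)\chi = \sum_{j=-2}^{M}\sum_{k}\lambda^{j}(\log\lambda)^{-k}B_{j,k}+\cdots$, explicitly allows such terms, and a prefactor $\lambda^{(d-1)/2}$ does not kill $\lambda^{-2}$ or $\lambda^{-1}$. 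Moreover your remedy -- ``shrinking $\epsilon$ to avoid any zero-energy eigenvalue or resonance of $P$'' -- cannot work: the troublesome behavior sits exactly at $\lambda=0$, and shrinking a punctured neighborhood of the origin never removes it. Finally, the parametrix gluing you gesture at is precisely the delicate step that needs proof, and it does not by itself rule out poles of the glued resolvent at $\lambda=0$.

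The paper's proof avoids all of this by not trying to show the cutoff resolvent is bounded at $0$. It accepts the possibly-unbounded Hahn-meromorphic expansion, transfers it to $S(\lambda)$ via Proposition~\ref{p:pe-zw} (so $S$ has an expansion with at most finitely many unbounded, finite-rank terms), and then uses the crucial input that $S(\lambda)$ is unitary for $\lambda>0$, hence $\|S(\lambda)\|=1$. Unitarity on the boundary ray is incompatible with any unbounded term in the expansion, so those terms must vanish. This unitarity argument is the key idea your proposal is missing, and it is what makes the statement true for arbitrary black-box perturbations rather than only for operators with no threshold obstruction. A lesser point: your argument (if it worked) would directly give $\lim S(\lambda)=I$, which is Corollary~\ref{cor:lat0}; the Proposition only asserts existence of the limit, and the identification of the limit with $I$ is handled separately in the paper with a short algebraic argument using $S^*(\lambda)=2I-\mathcal{R}S(e^{i\pi}\lambda)\mathcal{R}$.
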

\begin{proof}  For $0<\arg \lambda <\pi$ we set $R(\lambda)= (P-\lambda^2)^{-1}$.
It is well known (see e.g. \cite{sj-zw}) that 
 $\chi \in C_c^{\infty}(\Real^d)$, 
$\chi R(\lambda)\chi$ has a meromorphic continution to $\Lambda$
and that $R(\lambda)$ 
has only finitely many poles in the region with $0\leq \arg \lambda \leq \pi$.  We are most
concerned here with a more delicate analysis near $0$, which through Proposition 
\ref{p:pe-zw} will give us information about the scattering matrix near $0$.

Let $\chi\in C_c^{\infty}(\Real^d)$.
By  \cite[Theorem 7.9]{mu-st}\footnote{
See also \cite{b-g-d, jensenspec, jensenspec4}, or \cite[Theorem 4.1]{murata}
which contain more detailed information for more specific cases.}
 given $M\in \Natural$ there are 
$ \beta_j \in \Natural_0$ and operators
$B_{j,k}$, so
that near $|\lambda|=0$, with $0\leq \arg \lambda \leq \pi$,
\begin{equation}\label{eq:hh}
\chi R(\lambda) \chi= \sum _{j=-2}^M \sum_{k=-\beta_j}^{\infty} 
\lambda^j (\log \lambda)^{-k} B_{j,k} + O(|\lambda|^{M-\delta})
\end{equation}
if $\delta>0$.
Morevoer, the coefficients of the terms which are unbounded at the origin have finite rank.
By Proposition \ref{p:pe-zw} the scattering matrix has a similar expansion near the origin.
In particular, there are at most a finite number of terms which are unbounded near $|\lambda|=0$, and each has finite rank.

Now we use the fact that for $\arg \lambda=0$, $|\lambda|>0$,
 $S(\lambda)$ is unitary so that $\|S(\lambda)\|=1$.  But this together with 
the expansion of $S(\lambda) $ near $|\lambda|=0$ means that the expansion cannot
have any terms which are unbounded as $|\lambda|\rightarrow 0$.
\end{proof}

It was observed in \cite{b-g-d} that dimension $d=2$
for scattering by a Schr\"odinger
operator $-\Delta +V$, $\lim_{\lambda \downarrow 0}S(\lambda)=I$  for any
 real-valued $V$ satisfying certain decay conditions.  This had 
earlier been noted for Schr\"odinger operators in dimension $d\geq 4$,
see \cite{jensenspec, jensenspec4}.
This contrasts with the case of dimensions $d=1$ and $d=3$, 
e.g. \cite{b-g-w,j-k}.  We show here
that a  similar phenomena holds in any even dimension
 for any operator $P$ 
satisfying the black-box conditions of Sj\"ostrand-Zworski, including
the exterior Laplacians of the type considered in the main body of the 
paper.  

We give below a proof of this which is somewhat algebraic, and hence
is rather different from the proof given in \cite{b-g-d} for $d=2$
for Schr\"odinger
operators.\footnote{ We note that \cite{b-g-d} proved, for 
Schr\"odinger operators, a much stronger
result than our Corollary \ref{cor:lat0}, since \cite{b-g-d} 
finds the first term or two in the asymptotic expansion of $S(\lambda)-I$
at the origin.}
\begin{cor}\label{cor:lat0}
Let $d$ be even, let
$P$ be any self-ajoint operator satisfying the black-box conditions of
Sj\"ostrand-Zworski recalled above, and let $S$ denote the corresponding
scattering matrix.  Then $\lim_{\lambda \downarrow 0} S(\lambda)=I$.
\end{cor}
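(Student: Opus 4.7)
The strategy is to combine the explicit formula of Proposition~\ref{p:pe-zw} for $S(\lambda)$ with the resolvent expansion~(\ref{eq:hh}) to derive an asymptotic expansion of $S(\lambda)-I$ at the origin, and then exploit a parity obstruction available only in even dimensions to conclude that every surviving term tends to $0$.

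First, I would use Proposition~\ref{p:pe-zw} to write
\[
S(\lambda)-I = i\pi(2\pi)^{-d}\,\lambda^{(d-1)/2}\,\mathbb{E}^{\chi_3}_{+}(\lambda)\,[\Delta,\chi_1]\,R(\lambda)\,[\Delta,\chi_2]\,{}^t\mathbb{E}^{\chi_3}_{-}(\lambda).
\]
The operators $\mathbb{E}^{\chi_3}_{\pm}(\lambda)$ are Hilbert--Schmidt and entire in $\lambda$ (their kernels $\chi_3(x)\,e^{\pm i\lambda\langle x,\omega\rangle}$ are entire), so they admit Taylor expansions in integer powers of $\lambda$ at the origin. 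Since $[\Delta,\chi_j]$ is compactly supported and $\lambda$-independent, one can insert a cutoff $\chi$ supported near the black-box region so that $[\Delta,\chi_1]R(\lambda)[\Delta,\chi_2] = [\Delta,\chi_1]\chi R(\lambda)\chi [\Delta,\chi_2]$, and then (\ref{eq:hh}) applies. Multiplying the three factors and collecting terms yields, for any $M$,
\[
S(\lambda)-I = \sum_{j\geq -2}\sum_{k\geq -\beta'_j} \lambda^{(d-1)/2 + j}(\log\lambda)^{-k}\,T_{j,k} \;+\; O(|\lambda|^{M-\delta}),
\]
with $T_{j,k}$ bounded operators on $L^2(\Sphere^{d-1})$ and with all the unboundedly-behaved $T_{j,k}$ of finite rank.

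Next, I would invoke the unitarity argument from the proof of Proposition~\ref{p:contat0}: since $\|S(\lambda)\|=1$ for $\lambda>0$ and each term with $(d-1)/2+j<0$ (or with $(d-1)/2+j=0$ and $k<0$) is finite rank and unbounded as $\lambda\downarrow 0$, all such coefficients $T_{j,k}$ must vanish. The remaining terms have exponent $(d-1)/2+j\geq 0$. Now the parity obstruction kicks in: because $d$ is even, $(d-1)/2$ is a half-integer, and $j\in\Integers$, so $(d-1)/2+j$ can never equal $0$. Thus every surviving exponent is a \emph{strictly positive} half-integer. Since $\lambda^{\alpha}|\log\lambda|^{m}\to 0$ as $\lambda\downarrow 0$ for any $\alpha>0$ and any $m$, every surviving term tends to $0$ in operator norm, and hence $\lim_{\lambda\downarrow 0}(S(\lambda)-I)=0$.

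\textbf{Main obstacle.} The delicate point is the parity obstruction ruling out a constant term $\lambda^{0}$ in the expansion. This depends crucially on the fact that the explicit prefactor $\lambda^{(d-1)/2}$ from Proposition~\ref{p:pe-zw} is a half-integer power for even $d$, while the resolvent expansion~(\ref{eq:hh}) and the Taylor series of $\mathbb{E}^{\chi_3}_{\pm}$ contribute only integer shifts. Without this dimensional feature the argument would not close—consistently with the fact that $\lim_{\lambda\downarrow 0}S(\lambda)=I$ fails in odd dimensions $d=1,3$. A secondary, essentially bookkeeping, technicality is choosing the cutoffs $\chi,\chi_1,\chi_2,\chi_3$ so that the local resolvent expansion~(\ref{eq:hh}) can be substituted into the formula of Proposition~\ref{p:pe-zw}; this is straightforward since $[\Delta,\chi_j]$ is already compactly supported.
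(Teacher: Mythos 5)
Your approach is genuinely different from the paper's. The paper gives a short algebraic argument: it uses the even-dimensional identity $S^*(\lambda) = 2I - \mcr S(e^{i\pi}\lambda)\mcr$ (equation (\ref{eq:relation}), from \cite[(2.2)]{ch-hi4}) together with unitarity $S^*(0)S(0) = I$; for a unit eigenvector $u$ of $S(0)$ with eigenvalue $e^{i\theta}$ one obtains $e^{i\theta} = \langle (2I - \mcr S^*(0)\mcr)u,u\rangle$, which forces $|2-e^{i\theta}|\leq 1$ and hence $e^{i\theta}=1$. Your proposal instead pushes for an asymptotic expansion of $S(\lambda)-I$ near $\lambda=0$ and rules out a constant term by a parity count of exponents. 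That is closer in spirit to \cite{b-g-d}, which the paper explicitly notes it is taking a \emph{different} route from; the expansion route buys more (leading-order asymptotics of $S(\lambda)-I$) but at the cost of substantially more delicate analysis.

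There is, however, a real gap in the parity step. Your obstruction rests entirely on the prefactor $\lambda^{(d-1)/2}$ appearing in Proposition \ref{p:pe-zw} being a half-integer power when $d$ is even. But the original \cite[Proposition 2.1]{pe-zwsc} has prefactor $\lambda^{d-2}$, not $\lambda^{(d-1)/2}$; the exponent as reproduced in this paper appears to be a transcription slip. (This is consistent with (\ref{eq:K}): the kernel $k(\cdot,\cdot,\lambda)$ itself carries a factor $\lambda^{(d-3)/2}$ from the large-$|x|$ asymptotics of the free resolvent, and $(d-3)/2+(d-1)/2=d-2$.) With the correct prefactor $\lambda^{d-2}$, the total exponents are $d-2+j$ with $j\in\Integers$, which for even $d$ are even integers and certainly can vanish — for example $j=0$ when $d=2$, or $j=-2$ when $d=4$. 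Your argument still yields the conclusion for $d\geq 6$, since then $d-2+j\geq d-4>0$ for all $j\geq -2$; but for $d=2$ and $d=4$ the parity obstruction disappears and you would have to show by a direct computation that the relevant $\lambda^0(\log\lambda)^0$ coefficient of the resolvent-side product vanishes, or that it comes with a factor of $(\log\lambda)^{-1}$. That missing computation is precisely what the paper's algebraic argument is designed to avoid.

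Additionally, even where the exponent count is favorable you need to justify that the inner sums over $k$ in (\ref{eq:hh}) converge uniformly enough that multiplying by $\lambda^{\alpha}$ with $\alpha>0$ kills the whole block, not just each term; this is true for Hahn-meromorphic expansions in the sense of \cite{mu-st}, but should be stated.
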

\begin{proof}
By Proposition \ref{p:contat0} we can write $\lim_{\lambda \downarrow 0} S(\lambda)=S(0)
= \lim_{\lambda \downarrow 0} S(e^{i\pi } \lambda)$.

For $\lambda >0$, $S(\lambda)S^*(\lambda)=I$ and 
\begin{equation}\label{eq:relation}
S^*(\lambda)= 2I-\mcr S(e^{i\pi } \lambda)\mcr.
\end{equation}
  By continuity, both of these hold as well with $\lambda =0$.
In particular,  $S^*(0)S(0)=I$ and 
 any eigenvalue of $S(0)$ is of the form $e^{i\theta}$ for some 
$\theta \in \Real$.  Suppose $u$ is an eigenfunction of $S(0)$ with 
eigenvalue $e^{i\theta}$, and $\|u\|=1$.
Then
\begin{equation}\label{eq:applyident}
e^{i\theta}=\langle S(0) u,u\rangle =
\langle (2I-\mcr S^*(0) \mcr)u, u\rangle
\end{equation}
by (\ref{eq:relation}) at $\lambda=0$.  But since 
$\|\mcr S^*(0) \mcr\| \leq 1$, this means
$$|2-e^{i\theta}|= \left| \langle \mcr S^*(0) \mcr u, u\rangle \right|
\leq \|u\|^2 =1.
$$
Since we can have $|2-e^{i\theta}|\leq 1$ for $\theta\in \Real$ 
if and only if $e^{i\theta}=1$, we are done.
\end{proof}
We note that it is the application of (\ref{eq:relation}) in (\ref{eq:applyident}) that
is particular to the even-dimensional case.  The analogous relation
for the odd-dimensional case is that $S^*(e^{i\pi} \lambda)= \mcr S(\lambda) 
\mcr$ for $\lambda>0$, leading to the familiar
conclusion that if $e^{i\theta}$ is an eigenvalue of $S(0)$ in
odd dimensional black-box scattering, then $e^{i\theta}=\pm 1$.


\begin{thebibliography}{99}

\bibitem{beale} J. T. Beale, 
{\em Purely imaginary scattering frequencies for exterior domains.} 
Duke Math. J.
{\bf  41} (1974), 607-637. 

\bibitem{b-g-d} D. Boll\'e, F. Gesztesy, and C. Danneels,
{\em Threshold scattering in two dimensions.} 
Ann. Inst. H. Poincar\'e Phys. Th\'eor. {\bf 48} (1988), no. 2, 
175-204.

\bibitem{b-g-w} D. Boll\'e, F. Gesztesy, and S.F.J. Wilk, 
{\em A complete treatment of low-energy scattering in one dimension.}
J. Operator Theory {\bf 13} (1985), no. 1, 3-31. 

\bibitem{buslaev} V.S. Buslaev, {\em
Scattered plane waves, spectral asymptotics and trace formulae 
in exterior problems.} 
Dokl. Akad. Nauk SSSR {\bf 197} (1971) 999-1002. 



\bibitem{compsupp} T. Christiansen, 
{\em Spectral asymptotics for compactly supported perturbations 
of the Laplacian on $\Real^n$}. 
Comm. Partial Differential Equations {\bf 23} (1998), no. 5-6, 933-948.

\bibitem{bbmeromorphic} T.J. Christiansen, {\em Several complex variables and the order of growth of the
resonance counting function in Euclidean scattering.}  
International Mathematics Research Notices
{\bf  2006} (2006), Art. ID 43160, 36 pp.

\bibitem{chevenfxsgn} T.J. Christiansen, 
{\em Lower bounds for resonance counting functions for Schr\"odinger
operators with fixed sign potentials in even dimensions}.  To appear,
Journal of Spectral Theory. arXiv:1309.0754

\bibitem{ch-hi2} T.J. Christiansen and P.D. Hislop, 
{\em Maximal order of growth for the resonance counting functions for generic potentials in even dimensions.}
 Indiana Univ. Math. J. {\bf 59} (2010), no. 2, 621-660.



\bibitem{ch-hi4} T.J. Christiansen and P.D. Hislop, {\em
Some remarks on resonances in even-dimensional Euclidean scattering}.  
To appear, Transactions of the AMS.  arXiv:1307.5822
 



\bibitem{e-p1} J.-P. Eckmann and C.-A. Pillet,
 {\em Spectral duality for planar billiards.}
Comm. Math. Phys. {\bf 170} (1995), no. 2, 283-313. 

\bibitem{e-p2} J.-P. Eckmann and C.-A. Pillet,
{\em Zeta functions with Dirichlet and Neumann boundary conditions for exterior domains.}  Helv. Phys. Acta {\bf 70} (1997), no. 1-2, 44-65.


\bibitem{govorov} N.V. Govorov,
Riemann's boundary problem with infinite index. 
Edited and with an introduction and an appendix by I. V. Ostrovskiĭ. 
Translated from the 1986 Russian original by Yu. I. Lyubarskiĭ. 
Operator Theory: Advances and Applications, 67. Birkhäuser Verlag, Basel, 1994.

\bibitem{jensenspec} A. Jensen, {\em Spectral properties of Schrödinger operators and time-decay of the wave functions results in $L^2(\Real^m)$, $m\geq 5$} 
Duke Math. J. {\bf 47} (1980), no. 1, 57-80.

\bibitem{jensen} A. Jensen, {\em Resonances in an abstract analytic 
scattering theory}.  Annales de l'I.H.P., section A {\bf 33} 2 (1980),
209-223.

\bibitem{jensenspec4} A. Jensen, {\em Spectral properties of 
Schr\"odinger operators and time-decay of the wave functions. Results in  
$L^2(\Real^4)$}. J. Math. Anal. Appl. {\bf 101} (1984), no. 2, 397-422.

\bibitem{j-k} A. Jensen and T. Kato, 
{\em Spectral properties of Schrödinger operators and time-decay of the wave functions.}
 Duke Math. J. {\bf 46} (1979), no. 3, 583-611.

\bibitem{la-ph}
P.D. Lax and R. S. Phillips, {\em Decaying modes for the wave equation in the exterior of an obstacle.}
Comm. Pure Appl. Math. {\bf 22} (1969) 737-787. 


\bibitem{le-pe} A. Lechleiter and S. Peters, {\em Analytical characterization
and numerical approximation of interior eigenvalues for impenetrable scatterers
from far fields}.  To appear, Inverse Problems.

\bibitem{levin} B. Ja. Levin, {\em Distribution of zeros of entire functions},
American Mathematical Society, Providence, R.I. 1964.


\bibitem{m-r} A. Majda and J. Ralston, 
{\em An analogue of Weyl's theorem for unbounded domains. I.}
Duke Math. J. {\bf 45} (1978), no. 1, 183-196. 

\bibitem{melroseobstacle} R.B. Melrose,  {\em Polynomial bounds on the
distribution of poles in scattering by an obstacle}, Journe\'ees 
``Equations aux D\'eriv\'ees Partielle,'' Saint Jean-de Montes, 1984.

\bibitem{melrose} R.B. Melrose, {\em Weyl asymptotics for the phase in 
obstacle scattering.}
Comm. Partial Differential Equations {\bf 13} (1988), no. 11, 1431-1439.


\bibitem{mu-st} J. M\"uller and A. Strohmaier, 
{\em The theory of Hahn meromorphic functions, a holomorphic Fredholm 
theorem and its applications}.  Preprint, arXiv:1205.0236.

\bibitem{murata} M. Murata, {\em Asymptotic expansions in time for solutions 
of Schrödinger-type equations.} J. Funct. Anal. {\bf 49} (1982), no. 1, 10–56.

\bibitem{nedelec} L. Nedelec, {\em Multiplicity of resonances in 
black box scattering}, Canad. Math. Bull {\bf 47}, 2 (2004), 407-416.



\bibitem{olver} F.W.J. Olver, {\em Bessel functions of integer order},
in Handbook of mathematical functions with formulas, graphs, and mathematical 
tables.  Ed. M. Abramowitz and I. Stegun. 
National Bureau of Standards Applied Mathematics Series, 55.
Government Printing Office, Washington, DC, 1964.  355-434. 

\bibitem{pe-zwsc} V. Petkov and M. Zworski, 
{\em Semi-classical estimates on the scattering determinant.}
 Ann. Henri Poincar\'e {\bf 2} (2001), no. 4, 675-711.


 

\bibitem{robert} D. Robert, {\em On the Weyl formula for obstacles.
}  Partial differential equations and mathematical physics 
(Copenhagen, 1995; Lund, 1995), 264-285,
Progr. Nonlinear Differential Equations Appl., {\bf 21}, 
Birkhäuser Boston, Boston, MA, 1996. 

\bibitem{s-t} N. Shenk and D. Thoe, {\em Resonant states and poles of
the scattering matrix for perturbations of $-\Delta^*$}.
J. Math. Analysis and Applications {\bf 37} (1972), 467-491.

\bibitem{sj-zw} J. Sj\"ostrand and M. Zworski, 
{\em Complex scaling and the distribution of scattering poles.}
J. Amer. Math. Soc. {\bf 4} (1991), no. 4, 729-769. 

\bibitem{sj-zwconvex}  J. Sj\"ostrand and M. Zworski, 
{\em Asymptotic distribution of resonances for convex obstacles.}
 Acta Math. {\bf 183} (1999), no. 2, 191-253.

\bibitem{stefanov} P. Stefanov, {\em
Sharp upper bounds on the number of the scattering poles.}
 J. Funct. Anal. {\bf 231} (2006), no. 1, 111-142. 



\bibitem{vasy} A. Vasy, {\em Scattering poles for negative potentials}.
Comm.\ Partial Differential Equations {\bf 22} (1997), no. 1-2, 185--194.

\bibitem{vodeveven} G. Vodev, 
{\em Sharp bounds on the number of scattering poles in even-dimensional spaces.}
Duke Math. J. {\bf 74} (1994), no. 1, 1-17. 

\bibitem{vodev2} G. Vodev, 
{\em Sharp bounds on the number of scattering poles in the two-dimensional case.}
Math. Nachr. {\bf 170} (1994), 287-297. 


\bibitem{yacs} D.R. Yafaev, {\em On the scattering matrix for perturbations
of constant sign}.  Ann. Inst. Henri Poincar'e {\bf 57} (1992), 361-384.

\bibitem{zwpf}
M. Zworski,
{\em Poisson formulae for resonances.}
 S\'eminaire sur les \'Equations aux D\'eriv\'ees Partielles, 1996-1997, Exp. No. 
XIII, 14 pp., \'Ecole Polytech., Palaiseau, 1997. 

\bibitem{zw-pfeven} M. Zworski,
{\em Poisson formula for resonances in even dimensions.}
 Asian J. Math. {\bf 2} (1998), no. 3, 609-617. 

\bibitem{zwradpot} M. Zworski, {\em Sharp polynomial bounds on the number of 
scattering poles of radial potentials}, J. Funct. Anal. {\bf 82} (1989), 
370-403.


\end{thebibliography}
\end{document}